\newtheorem{theorem}{Theorem}
\newtheorem{lemma}[theorem]{Lemma}
\theoremstyle{definition}
\newtheorem*{definition}{Definition}
\theoremstyle{remark}
\begin{document}

\title[]{Re-examining aggregation in the Tallis-Leyton model of parasite acquisition} 

\author[]{R. McVinish}
\address{School of Mathematics and Physics, University of Queensland}
\email{r.mcvinish@uq.edu.au}

\keywords{}

\begin{abstract}
The Tallis-Leyton model is a simple model of parasite acquisition where there is no interaction between the host and the acquired parasites. We examine the effect of model parameters on the distribution of the host's parasite burden in the sense of the Lorenz order. This fits with an alternate view of parasite aggregation that has become widely used in empirical studies but is rarely used in the analysis of mathematical models of parasite acquisition.
\end{abstract}

\maketitle

\section{Introduction} 

The distribution of parasites among its host population typically displays a high degree of inequality. In fact, this phenomenon, called aggregation, is almost a universally observed \citep{SD:95, Poulin:07}. Despite its importance, aggregation lacks a universally accepted definition. Instead, the phenomenon is studied using a number of common measures of aggregation that summarise different properties of the parasite's distribution \citep{Pielou:77,ML:2020, MPF:2023}. The most commonly used measures of aggregation in theoretical models are the $k$ parameter of the negative binomial distribution \citep{AM:78a, AM:78b, RP:02, Schreiber:06, McPherson:12} and the variance-to-mean ratio \citep{Isham:95, BP:00, HI:2000, Peacock:18}. Both of these measures can be interpreted as quantifying how far the distribution of parasites deviates from a Poisson distribution.

An alternative view of aggregation was put forward by \citet{Poulin:1993}, arguing that a measure of the discrepancy between the observed distribution of parasites in the hosts and the ideal distributions where all hosts are infected with the same number of parasites would be the best measure of aggregation. This view puts the Lorenz ordering of distributions \citep{Lorenz:1905,Arnold:87} central in the study of aggregation. As a measure of this discrepancy, Poulin proposed his index, {\em D}, which is essentially an estimator of the Gini index \citep{Gini:24}. This has since become  become one of the standard measure of aggregation used in studies of wild parasite populations. Although the Gini index has been commonly used in empirical studies since Poulin's proposal, we are unaware of any examination of its behaviour in theoretical models of parasite acquisition.

The aim of this paper is to apply Poulin's view of aggregation to a simple stochastic model for the number of parasites in a definitive host proposed by \citet{TL:1969}. The Tallis-Leyton model assumes that the host makes infective contacts following a Poisson process and at each infective contact a random number of parasites enter the host. Once a parasite enters the host, it is alive for a random period of time. The lifetimes of parasites, numbers of parasites entering the host at infective contacts, and the Poisson process of infective contacts are all assumed to be independent. Furthermore, the parasites are assumed to have no effect on the host mortality. 

In Section 2 we review some background on the Lorenz ordering and the closely related convex ordering of distributions. The Tallis-Leyton model is analysed in Section 3. We first show show how the host's parasite burden can be represented as a compound Poisson distribution. This representation is then applied to determine how each of the model parameters affect the Lorenz ordering of the distribution of parasites in the host. The final part of the analysis shows that the host's parasite burden is asymptotically normally distributed in the limit as the rate of infectious contacts goes to infinity. The paper concludes with a discussion of future challenges in analysing models of parasite aggregation.

\section{Background}

\subsection{Tallis-Leyton model}

\citet{TL:1969} proposed the following model for the parasite burden $M(a)$ of a definitive host at age $a$, conditional on survival of the host to age $a$. The host is assumed to be parasite free at birth so $M(0) = 0$. The host makes infective contacts during its lifetime following a Poisson process with rate parameter $\lambda$. At each infective contact, the number of parasites that enter the host is a random variable $N$ and once a parasite enters the host it survives for a random period $T$. The parasites have no effect on the host mortality.  The lifetimes of parasites, numbers of parasites entering the host at infective contacts, and the Poisson process of infective contacts are all assumed to be independent. Although we won't make use of this fact, we note that this process also describes an infinite server queue with bulk arrivals and general independent service times \citep{HCK:83}.

We let $G_X$, $F_X$ and $\bar{F}_X = 1 - F_{X}$ denote the probability generating function (PGF), distribution function, and survival function of a random variable $X$. We write $G_{M}(\cdot;a)$ for the PGF of $ M(a)$, conditional on survival of the host to age $ a$. \citet{TL:1969} showed
\[
G_{M}(z;a) = \exp \left(\lambda \int_{0}^{a} \left[G_{N}(1 + \bar{F}_{T}(a-s) (z-1)) -1 \right] ds\right).
\]
Standard arguments show that the mean and variance of $M(a)$ are
\begin{align*}
    \mu(a) & = \lambda \mathbb{E} N \int_{0}^{a} \bar{F}_{T}(s)\, ds \\
    \sigma^{2}(a) & = \lambda \mathbb{E} N(N-1) \int_{0}^{a} \bar{F}_{T}^2(s)\, ds + \lambda \mathbb{E} N \int_{0}^{a} \bar{F}_{T}(s)\, ds.
\end{align*}
Hence, the variance-to-mean ratio is
\begin{equation} \label{eq:TL_VMR}
\text{VMR}(M(a)) = 1 + \frac{\mathbb{E} N(N-1)}{\mathbb{E} N} \frac{\int^a_0 \bar{F}^{2}_{T}(s) ds}{\int^a_0 \bar{F}_{T}(s) ds}.
\end{equation}
Assuming $\mathbb{E}N < \infty$ and $\mathbb{E} T < \infty $, the limiting distribution of parasite burden as $ a\to\infty$ exists and has PGF 
\[
G_{M}(z,\infty) = \exp \left(\lambda \int_{0}^{\infty} \left[G_{N}(1 + \bar{F}_{T}(s) (z-1)) -1 \right] ds\right).
\]
Since scaling of the host age, the lifetime distribution of parasites, and the inverse of the rate of infective contacts by a common factor results in the same distribution for the host's parasite burden, we assume that the expected lifetime of a parasite is 1. 

\subsection{Convex order and Lorenz order}

\citet{Lorenz:1905} proposed the Lorenz curve as a graphical measure of inequality. The following general definition of the Lorenz curve was given by \citet{Gastwirth:1971}.

\begin{definition}
    The Lorenz curve $L: [0,1] \to [0,1] $ for the distribution $F$ with finite mean $ \mu$ is given by
    \[
        L(u) = \frac{\int_{0}^{u} F^{-1}(y)\, dy}{\mu}, 
    \]
    where $ F^{-1}$ is the quantile function
    \[
        F^{-1}(y) = \sup \{ x : F(x) \leq y \} \quad \text{for } y \in (0,1).
    \]
\end{definition}
Adapting the description in  \citet[Section 3.1]{Arnold:87} to a parasitology context, the Lorenz curve $L(u)$ represents the proportion of the parasite population infecting the least infected $u$ proportion of the host population. When all hosts are infected with the same number of parasite, the Lorenz curve is given by $L(u) = u$ and is called the egalitarian line. Note that $L_{X}(u) \geq u$ for all $ u \in [0,1]$. 

The Lorenz curve defines a partial order on the class of all distributions on $[0,\infty)$ with finite mean \citep[Definition 3.2.1]{Arnold:87}.

\begin{definition}
    Let $ X$ and $Y$ be random variables with the respective Lorenz curves denoted $ L_{X} $ and $ L_{Y}$. We say $ X$ is smaller in the Lorenz order, denoted $ X \leq_{\rm{Lorenz}} Y $ if $ L_{X}(u) \geq L_{Y}(u)$ for every $ u \in [0,1]$.
\end{definition} 

The Lorenz curves of some standard distributions are given in \citet[Section 6.1]{Arnold:87}. The negative binomial distribution is extensively used in parasitology. When negative binomial distributions are parameterised in terms of the mean $m$ and $k$, they can be compared in the Lorenz order \citep{ML:2024}. Specifically, let $ \mathsf{NB}(m,k)$ denote the negative binomial distribution with PGF
\[
G(z) = \left(\frac{k}{k + m - mz} \right)^{k}.
\]
Then 
\begin{itemize}
    \item[(i)] for any $ k >0 $ and $ 0< m_{1} < m_{2}$, $\mathsf{NB}(k,m_{2}) \leq_{\rm Lorenz} \mathsf{NB}(k, m_{1})$, and 
    \item[(ii)] for any $ m>0  $ and $ 0< k_{1} < k_{2} $, $\mathsf{NB}(k_{2},m) \leq_{\rm Lorenz} \mathsf{NB}(k_{1},m)$.
\end{itemize}

Closely related to the Lorenz order is the convex order of distributions.

\begin{definition}
    Let $X$ and $Y$ be two random variables such that $\mathbb{E} X = \mathbb{E} Y$. We say $X$ is small than $Y$ in the convex order, denoted $X \leq_{\rm cx} Y$, if $ \mathbb{E} \phi(X) \leq \mathbb{E} \phi(Y)$ for all convex functions $ \phi: \mathbb{R} \to \mathbb{R}$, provided the expectations exist. 
\end{definition}

These two orderings are related since $ X \leq_{\rm Lorenz} Y$ if and only if 
\[
\mathbb{E}\, \phi\left(\frac{X}{\mathbb{E}X}\right) \leq \mathbb{E}\, \phi\left(\frac{Y}{\mathbb{E}Y}\right)
\]
for every continuous convex function $\phi$ \citep[Corollary 3.2.1]{Arnold:87}. In other words,
\begin{align*}
X &\leq_{\rm Lorenz} Y    & \text{is equivalent to}&  &\frac{X}{\mathbb{E} X} &\leq_{\rm cx} \frac{Y}{\mathbb{E} Y}.
\end{align*}

\citet[Section 3.A]{SS:07} provide an extensive review of results on the convex order. We briefly mention some of the important results that are used in our analysis. 
\begin{itemize}
    \item The convex order is {\em closed under weak limits} provided the expectations also converge \citep[Theorem 3.A.12 (c)]{SS:07}.
    \item  The convex order is {\em closed under mixtures} \citep[Theorem 3.A.12 (b)]{SS:07}. Let $X$, $Y$, and $\Theta$ be random variables and write $[X \mid \Theta = \theta] $ and $ [Y \mid \Theta = \theta]$ for the conditional distributions of $X$ and $Y$ given $ \Theta = \theta$. If $ [X \mid \Theta = \theta] \leq_{\rm cx} [Y \mid \Theta = \theta]$ for all $\theta$ in the support of $\Theta$, then $X \leq_{\rm cx} Y$. As an application of this property we can say that if $X \leq_{\rm cx} Y$ and $Z$ is an independent non-negative random variable, then $ ZX \leq_{\rm cx} ZY$.
    \item The convex order is {\em closed under convolutions} \citep[Theorem 3.A.12 (d)]{SS:07}. Let $X_1, X_2,\ldots, X_k$  and $Y_1, Y_2, \ldots, Y_k$ be two sets of independent random variables . If $X_j \leq_{\rm cx} Y_{j}$ for $j=1,2,\ldots,k$, then 
    \[
    \sum_{j=1}^{k} X_{j} \leq_{\rm cx} \sum_{j=1}^{k} Y_{j}.
    \]
    \item Combining the properties of closure under mixtures and closure under convolutions, we see the convex order is {\em closed under random sums} so 
    \[
    \sum_{j=1}^{K} X_{j} \leq_{\rm cx} \sum_{j=1}^{K} Y_{j},
    \]
    for any non-negative integer random variable $K$. As an application of the closure under random sums property of the convex order, consider two random variables $K$ and $\tilde{K}$ that related by binomial thinning. That is, $G_{\tilde{K}}(z) = G_{K}(1 - p + pz)$ for some $p \in (0,1)$. Then   $ K \leq_{\rm Lorenz} \tilde{K}$ \citep[Section 3]{ML:2020}
    \item The closure under random sums property can be adapted to the case where the $X_1, X_2, \ldots$ and $Y_1, Y_2, \ldots$ are two iid sequences with $X \leq_{\rm cx} Y$, and $K_1$ and $K_2$ are non-negative integer random variables such that $K_{1} \leq_{\rm cx} K_{2}$. In this case, \citet[Theorem 3.A.13]{SS:07} implies 
    \[
    \sum_{j=1}^{K_1} X_{j} \leq_{\rm cx} \sum_{j=1}^{K_2} Y_{j}.
    \]
    \item The survival function can be used to establish if two random variables can be compared in the convex order. If $X$ and $Y$ are two random variables with the same mean, then $X \leq_{\rm cx} Y$ if $\bar{F}_X - \bar{F}_Y$ has a single sign change and the sign sequence is $ +, -$ \citep[Theorem 3.A.44(b)]{SS:07}. This property can also be used to characterise the convex order \citep[Theorem 3.A.45]{SS:07}.
\end{itemize}

\subsection{Measures of aggregation}
In practice, levels of aggregation are compared with numerical summaries rather than using the entire Lorenz curve. A useful measure of aggregation respects the Lorenz ordering. That is, if $I(\cdot)$ is a measure of aggregation that respects the Lorenz ordering and $X \leq_{\rm Lorenz} Y$, then $I(X) \leq I(Y)$. \citet[Chapter 5]{Arnold:87} review several inequality measures and these can be applied as measures of aggregation. We restrict our attention in this paper to the Gini index, the Hoover index (also known as the Pietra index, or the Robin-Hood index),  $ 1 - \text{prevalence} $, and the coefficient of variation.

The Gini index \citep{Gini:24} is given by twice the area between the egalitarian line and the Lorenz curve. For a random variable $X$, the Gini index can be expressed as
\[
G(X) = \frac{\mathbb{E}\, |X - \tilde{X}|}{2\, \mathbb{E} X},
\]
where $ \tilde{X}$ is an independent random variable with $ \tilde{X} \stackrel{d}{=} X $. The Hoover index is given by the maximum vertical distance between the egalitarian line and the Lorenz curve. \citet{ML:2020} argue that this index could be useful due to its simple interpretation as the proportion of the parasite population that would need to be redistributed among the hosts in order for all hosts to have the same parasite burden. The Hoover index can be expressed as
\[
H(X) = \frac{\mathbb{E}\left|X - \mathbb{E} X\right|}{2\, \mathbb{E} X}
\]
Prevalence, the probability that a host is infected by at least one parasite, is an important quantity in parasitology. Although prevalence is not usually thought of as a measure of aggregation, we may express $ 1 - \text{prevalence} $ in terms of the  Lorenz curve as
\[
1 - \text{prevalence} = \max\{u: L(u) = 0 \}.
\]
For the Tallis-Leyton model,
\begin{equation}
1 - \text{prevalence} = G_{M}(0;a) = \exp \left(\lambda \int_{0}^{a} \left[G_{N}(1 - \bar{F}_{T}(a-s)) -1 \right] ds\right). \label{eq:1minusPrev}
\end{equation}
Finally, the coefficient of variation is given by
\[
CV(X) = \frac{\sqrt{\text{Var}(X)}}{\mathbb{E} X}.
\]
This measure is rarely used in parasitology, though it is mentioned in some reviews on parasite aggregation such as \citet{Wilson:2001} and \citet{ML:2020}. As means and variances are commonly reported in empirical studies and are often easily calculated for theoretical models, it may be useful in some contexts. For example, the squared coefficient of variation for the Tallis-Leyton model is 
\begin{equation} \label{eq:TL_CV}
\text{CV}^{2}(M(a)) = \frac{1}{\lambda} \frac{\mathbb{E} N(N-1)}{(\mathbb{E} N)^{2}} \frac{\int^a_0 \bar{F}^{2}_{T}(s) ds}{ \left( \int^a_0 \bar{F}_{T}(s) ds \right)^{2}} + \frac{1}{\lambda \mathbb{E} N \int^a_0 \bar{F}_{T}(s) ds}.   
\end{equation}
These indices are related by the following inequality
\[
1 - \text{prevalence} \leq H(X) \leq G(X) \leq H(X) (2- H(X)) \leq CV(X)
\]
\citep{Taguchi:68, ML:2020}. In particular, if $ \mathbb{E}X \leq 1$, then we have the equality $1 - \text{prevalence} = H(X)$. The Gini index and Hoover index can be further related to the coefficient of variation when the distribution of parasites is approximately normal. Suppose $X_{1}, X_{2}, \ldots $ is a sequence of random variables such that 
\[
\frac{X_{n} - \mathbb{E}X_{n}}{\sqrt{\text{Var}(X_{n})}} =: Z_{n} \stackrel{d}{\to} Z, 
\]
where $Z \sim \mathsf{N}(0,1)$. As $X_{n} \geq 0 $ with probability one, the above limit is only possible if  $CV(X_n) \to 0$. Nevertheless, the ratio of the Hoover index to the coefficient of variation still has a well defined limit. The Hoover index of $X_n$ can be expressed as
\[
H(X_{n}) = \frac{\mathbb{E} \left|X_{n} - \mathbb{E} X_{n}\right|}{2 \mathbb{E} X_{n}}  = \frac{\sqrt{\text{Var}(X_{n})}}{2\mathbb{E} X_{n}} \mathbb{E}|Z_{n}|.
\]
Since $ \mathbb{E} Z_{n}^{2} = 1$, this collection of random variables is uniformly integrable and $ \mathbb{E} |Z_{n}| \to \mathbb{E}|Z| = \sqrt{2/\pi} $. Hence,
\begin{equation}
\frac{H(X_{n})}{CV(X_{n})} \to \frac{1}{\sqrt{2\pi}}.  \label{eq:Hoover_approx}
\end{equation}
Similarly, the Gini index of $X_n$ can be expressed as
\[
G(X_{n}) = \frac{\mathbb{E} |X_{n} - \tilde{X}_{n}|}{2 \mathbb{E} X_{n}}  = \frac{\sqrt{\text{Var}(X_{n})}}{2\mathbb{E} X_{n}} \mathbb{E}|Z_{n} - \tilde{Z}_{n}|,
\]
where $\tilde{Z}_{n}$ is an independent random variable with $\tilde{Z}_{n} \stackrel{d}{=} Z_{n}$. Applying the asymptotic normality and uniform integrability of the $Z_{n}$, 
\begin{equation}
\frac{G(X_{n})}{CV(X_{n})} \to \frac{1}{\sqrt{\pi}} .  \label{eq:Gini_approx}
\end{equation}

\subsection{Numerical evaluation}
From equation (\ref{eq:TL_CV}), the coefficient of variation can be relatively easily evaluated for the Tallis-Leyton model. Numerical integration of $ \bar{F}_{T}(s)$ and $\bar{F}_{T}^2(s)$ may be required, but the dependence on age and $\lambda $ is explicit. Similarly, $ 1 - \text{prevalence} $ could be evaluated with a single numerical integration using (\ref{eq:1minusPrev}). On the other hand, evaluation of the Hoover and Gini indices require evaluation of the probability mass function. In the examples of the next section, we numerically evaluate the probability mass function of $M(a)$ by inverting $G_{M}(z;a)$ using the Abate-Whitte algorithm \citep{AB:92}. The algorithm was implemented in MATLAB \citep{MATLAB} using the \texttt{vpa} function in the Symbolic Math Toolbox \citep{vpa} for high precision arithmetic.

\section{Analysis of the Tallis-Leyton model}

The analysis begins with a representation of the host's parasite burden as a compound Poisson distribution. This representation is used extensively to understand how the rate of infective contacts, the distribution of the number of parasites that enter the host during an infective contact, the age of the host, and lifetime distribution of the parasites all affect the distribution of parasites in the host in terms of the Lorenz order. When comparing the host's parasite burden in two systems, the parameters of the second parasite-host system is distinguished by a tilde.

\subsection{Compound Poisson representation}

Let $U_{1}, U_{2},\ldots $ be a sequence of independent standard uniform random variables and define $X:\mathbb{Z}_{\geq 0} \times [0,1] \to \mathbb{Z}_{\geq 0}$ by
\[
X(n,v) := \sum_{k=1}^{n} \mathbb{I}(U_{k} \leq v),
\]
with $X(n,v) = 0$ when $ n=0$. For given $n$ and $v$ the distribution of $X(n,v)$ is  $\mathsf{Bin}(n,v)$.  

\begin{theorem} \label{thm:Rep}
Assume $ T$ has a continuous distribution. Define $ V$ to be a random variable on $[\bar{F}_{T}(a),1] $ with distribution function
\begin{equation}
F_{V}(v) = 1 - a^{-1} \bar{F}^{-1}_{T}(v) \label{eq:V}
\end{equation}
Let $X_{1},X_{2},\ldots$ be a sequence of independent random variables with the same distribution as $X(N,V)$, where $N$ and $V$ are independent.  Let $\Lambda(t)$ be a Poisson process with rate $\lambda$ so $\Lambda(t) \sim \mathsf{Poi}(\lambda t)$. Then
\[
M(a) \stackrel{d}{=} \sum_{k=1}^{\Lambda(a)} X_{k}.
\]    
\end{theorem}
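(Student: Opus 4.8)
The plan is to verify the claimed distributional identity by showing that both sides share the same probability generating function, which determines the law of a $\mathbb{Z}_{\ge 0}$-valued random variable. Writing $G_X$ for the common PGF of the summands $X_k$, the standard formula for a compound Poisson sum with $\Lambda(a)\sim\mathsf{Poi}(\lambda a)$ gives
\[
\mathbb{E}\Bigl[z^{\sum_{k=1}^{\Lambda(a)} X_k}\Bigr] = \exp\bigl(\lambda a (G_X(z)-1)\bigr).
\]
The entire problem then reduces to computing $G_X(z)$ and checking that $\lambda a(G_X(z)-1)$ coincides with the exponent $\lambda\int_0^a [G_N(1+\bar{F}_T(a-s)(z-1))-1]\,ds$ appearing in Tallis and Leyton's formula for $G_M(z;a)$.

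First I would compute $G_X(z)$, the PGF of $X(N,V)$. Conditional on $N=n$ and $V=v$, the variable $X(n,v)$ is $\mathsf{Bin}(n,v)$ with PGF $(1+v(z-1))^n$, as noted in the construction. Taking expectations over $N$ and then over $V$, and using the independence of $N$ and $V$, yields $G_X(z)=\mathbb{E}\bigl[G_N(1+V(z-1))\bigr]$.

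The crux is to evaluate this expectation via the law of $V$. The cleanest route is to recognise that the distribution function (\ref{eq:V}) is exactly that of $V\stackrel{d}{=}\bar{F}_T(aU)$ with $U$ standard uniform: since $T$ is continuous, $\bar{F}_T$ is a continuous strictly decreasing bijection from $[0,a]$ onto $[\bar{F}_T(a),1]$, and for $v\in[\bar{F}_T(a),1]$,
\[
\mathbb{P}\bigl(\bar{F}_T(aU)\le v\bigr) = \mathbb{P}\bigl(U\ge a^{-1}\bar{F}_T^{-1}(v)\bigr) = 1-a^{-1}\bar{F}_T^{-1}(v) = F_V(v).
\]
Substituting this representation and changing variables $t=au$ then gives
\[
G_X(z) = \int_0^1 G_N\bigl(1+\bar{F}_T(au)(z-1)\bigr)\,du = \frac{1}{a}\int_0^a G_N\bigl(1+\bar{F}_T(t)(z-1)\bigr)\,dt.
\]

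Finally I would subtract $1$, multiply by $\lambda a$, and apply the reflection $t=a-s$ (under which $\bar{F}_T(t)$ and $\bar{F}_T(a-s)$ sweep the same values over $[0,a]$) to obtain
\[
\lambda a(G_X(z)-1) = \lambda\int_0^a\bigl[G_N(1+\bar{F}_T(a-s)(z-1))-1\bigr]\,ds = \log G_M(z;a),
\]
which matches the two PGFs and establishes the representation. The only delicate point is the change of variables in the third step: one must invoke the continuity of $T$ to guarantee that $\bar{F}_T^{-1}$ is well defined and that $\bar{F}_T(aU)$ has distribution function precisely (\ref{eq:V}). Everything else is routine manipulation of generating functions, and the continuity hypothesis in the theorem is exactly what makes this identification clean.
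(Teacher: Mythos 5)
Your proof is correct and follows essentially the same route as the paper: both arguments identify the law of the compound Poisson sum by computing the PGF of $X(N,V)$ and reducing the resulting integral, via a change of variables, to the exponent in Tallis and Leyton's formula for $G_M(z;a)$. Your representation $V \stackrel{d}{=} \bar{F}_T(aU)$ with $U$ uniform is a slightly more explicit way of performing the substitution $v = \bar{F}_T(a-s)$ that the paper applies directly to the Stieltjes integral, but the substance is identical.
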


\begin{proof}
Standard conditioning arguments show that the PGF of $ X(N,v) $  is 
\[
G_{X(N,v)}(z) = G_{N}\left(1 +  v (z-1) \right).
\]
Hence,
\begin{align*}
G_{X(N,V)}(z)& = \int_{\bar{F}_{T}(a)}^{1} G_{N}\left(1 +  v(z-1) \right) d \left[1 - a^{-1} \bar{F}^{-1}_{T}(v)\right]  \label{Rep:Eq2}  
\end{align*}
Again applying standard conditioning arguments, we see the PGF of $\sum_{k=1}^{\Lambda(a)} X_{k}$ is
\begin{align*}
    G_{ \sum_{k=1}^{\Lambda(a)} X_{k}}(z) & = \exp \left\{ \lambda a \left( \int_{\bar{F}_{T}(a)}^{1} G_{N}\left(1 +  v(z-1) \right) d \left[1 - a^{-1} \bar{F}^{-1}_{T}(v)\right]  -1 \right)   \right\} \\
    & = \exp \left\{ \lambda  \int_{\bar{F}_{T}(a)}^{1} \left[ G_{N}\left(1 +  v(z-1) \right) - 1 \right] d \left[a - \bar{F}^{-1}_{T}(v)\right]    \right\}.
\end{align*}
Upon making the substitution $ v = \bar{F}_{T}(a-s)$, the PGF of $\sum_{k=1}^{\Lambda(a)} X_{k}$ can be expressed as 
\begin{align*}
G_{\sum_{k=1}^{\Lambda(a)} X_{k}}(z)& = \exp \left(\lambda \int_{0}^{a} \left[G_{N}(1 + \bar{F}_{T}(a-s) (z-1)) -1 \right] ds\right) = G_{M}(a;z).
\end{align*}
\end{proof}

\subsection{Rate of infective contacts}

Our first comparison result concerns the effect of the rate of infective contacts on the distribution of the host's parasite burden. The rate of infective contacts has no effect on the variance-to-mean ratio (\ref{eq:TL_VMR}), whereas the coefficient of variation is strictly decreasing as the rate of infective contacts increases (\ref{eq:TL_CV}). The following result shows that any index respecting the Lorenz order is decreasing as a function of the rate of infective contacts.

\begin{theorem} \label{thm:compRate}
    If $ \tilde{\lambda} < \lambda $ and all other model parameters are equal, then $ M(a) \leq_{\rm Lorenz} \tilde{M}(a)$.
\end{theorem}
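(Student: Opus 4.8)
The plan is to reduce the scale-free Lorenz comparison to an equal-mean convex order comparison, and then to prove that comparison jump-by-jump using the closure properties from Section~2. Write $p = \tilde\lambda/\lambda \in (0,1)$, let $X$ denote the jump distribution $X(N,V)$ of Theorem~\ref{thm:Rep} (which is unchanged when only $\lambda$ varies), and set $\mu_X = \mathbb{E}X = \mathbb{E}N\,\mathbb{E}V$, which is finite since $\mathbb{E}N < \infty$. Because $\mathbb{E}M(a) = \lambda a\,\mu_X$ and $\mathbb{E}\tilde M(a) = \tilde\lambda a\,\mu_X = p\,\mathbb{E}M(a)$, the equivalence $M(a) \leq_{\rm Lorenz} \tilde M(a) \iff \frac{M(a)}{\mathbb{E}M(a)} \leq_{\rm cx} \frac{\tilde M(a)}{\mathbb{E}\tilde M(a)}$, together with the fact that the convex order is preserved under multiplication by a positive constant, shows that $M(a) \leq_{\rm Lorenz} \tilde M(a)$ is \emph{equivalent} to the equal-mean statement $p\,M(a) \leq_{\rm cx} \tilde M(a)$. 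Inserting the factor $p$ to match the means is the conceptual crux: it converts the Lorenz order into a genuine convex order to which the closure machinery applies.

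Next I would realise $\tilde M(a)$ as a thinning of the very compound Poisson that produces $M(a)$. By Theorem~\ref{thm:Rep}, $M(a) \stackrel{d}{=} \sum_{k=1}^{\Lambda(a)} X_k$ with $\Lambda(a) \sim \mathsf{Poi}(\lambda a)$ and $X_k$ iid copies of $X$. Introduce iid $B_k \sim \mathsf{Bin}(1,p)$ independent of everything else. Since the jump PGF of $B_k X_k$ is $(1-p) + p\,G_X(z)$, the compound Poisson PGF becomes $\exp\bigl(\lambda a\,[(1-p)+p\,G_X(z)-1]\bigr) = \exp\bigl(\tilde\lambda a\,[G_X(z)-1]\bigr)$, so $\sum_{k=1}^{\Lambda(a)} B_k X_k \stackrel{d}{=} \tilde M(a)$. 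Writing $p\,M(a) = \sum_{k=1}^{\Lambda(a)} (p X_k)$, the target $p\,M(a) \leq_{\rm cx} \tilde M(a)$ now compares two random sums over the \emph{identical} count $\Lambda(a)$, with summands $p X_k$ (deterministic scaling) versus $B_k X_k$ (Bernoulli thinning).

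The heart of the argument is the jump-level inequality $pX \leq_{\rm cx} BX$, where $B \sim \mathsf{Bin}(1,p)$ is independent of $X \geq 0$. Both have mean $p\,\mu_X$, and for any convex $\phi$ and any realisation $x \geq 0$, convexity gives $\phi(px) = \phi\bigl(p\,x + (1-p)\,0\bigr) \leq p\,\phi(x) + (1-p)\,\phi(0)$; taking expectations yields $\mathbb{E}\phi(pX) \leq p\,\mathbb{E}\phi(X) + (1-p)\phi(0) = \mathbb{E}\phi(BX)$. With $p X_k \leq_{\rm cx} B_k X_k$ for each $k$, closure of the convex order under convolutions (conditionally on $\Lambda(a) = n$) followed by closure under mixtures over $\Lambda(a)$ — that is, closure under random sums with a common independent count — gives $\sum_{k=1}^{\Lambda(a)} p X_k \leq_{\rm cx} \sum_{k=1}^{\Lambda(a)} B_k X_k$, i.e.\ $p\,M(a) \leq_{\rm cx} \tilde M(a)$. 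By the equivalence of the first paragraph this is exactly $M(a) \leq_{\rm Lorenz} \tilde M(a)$.

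The step I expect to be the genuine obstacle is the reformulation in the first paragraph, not the estimates that follow: one must spot that the means have to be equalised (via the factor $p$) before the convex order can be invoked, and that the correct jump-level comparison is deterministic scaling against Bernoulli thinning. Once the problem is posed as scaling versus thinning of a common compound Poisson, the pointwise convexity estimate $\phi(px) \leq p\,\phi(x) + (1-p)\phi(0)$ and the stated closure properties finish the proof routinely; the only bookkeeping to double-check is the independence of $\Lambda(a)$ from the summand sequences and the finiteness of $\mu_X$.
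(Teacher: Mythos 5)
Your proposal is correct and follows essentially the same route as the paper: represent $\tilde M(a)$ as a Bernoulli thinning of the jumps of the compound Poisson representation of $M(a)$, establish the jump-level comparison $pX \leq_{\rm cx} BX$, and conclude by closure under random sums. The only cosmetic difference is that you verify $pX \leq_{\rm cx} BX$ directly from the pointwise inequality $\phi(px) \leq p\,\phi(x) + (1-p)\phi(0)$, whereas the paper obtains it from $p \leq_{\rm cx} B$ and closure under mixtures; these are interchangeable.
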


\begin{proof}
    Set $\kappa = \tilde{\lambda}/\lambda$. Let $ X_{1}, X_{2}, \ldots$ be a sequence of independent random variables having the same distribution as $X(N,V)$ and let $B_{1},B_{2},\ldots$ be a sequence of independent $\mathsf{Ber}(\kappa)$ random variables that are also independent of the $X_{k}$. As $ \kappa \leq_{\rm cx} B_{k} $ and the convex order is closed under mixtures, $ \kappa X_{k} \leq_{\rm cx} B_{k} X_{k}$. The PGF of $ B_{k} X_{k}$ is $G_{ B_{k} X_{k}}(z) = \kappa\, G_{X(N,V)}(z) + 1-\kappa$. Let $\Lambda(t)$ be a Poisson process with rate $\lambda$. As the convex order is closed under random sums,
    \[
    \kappa \sum_{k=1}^{\Lambda(a)} X_{k} \leq_{\rm cx} \sum_{k=1}^{\Lambda(a)}  B_{k} X_{k}.
    \]
    By Theorem \ref{thm:Rep}, $ \sum_{k=1}^{\Lambda(a)} X_{k} \stackrel{d}{=} M(a)$. To determine the distribution of $ \sum_{k=1}^{\Lambda(a)}  B_{k} X_{k}$, we evaluate its PGF 
    \begin{align*}
    G_{\sum_{k=1}^{\Lambda(a)}  B_{k} X_{k}}(z) & = \exp\left\{\lambda a\left[(\kappa\, G_{X(N,V)}(z) + 1-\kappa) -1 \right] \right\} \\
    & = \exp\left\{\tilde{\lambda} a \left[ G_{X(N,V)}(z) -1 \right] \right\} = G_{\tilde{M}}(a;z).
    \end{align*}
    Hence, $ M(a) \leq_{\rm Lorenz} \tilde{M}(a)$.     
\end{proof}

Figure \ref{fig:Infection} shows the four indices (Gini, Hoover, $ 1 - \text{prevalence} $, and coefficient of variation) for a host aged 3 where  $ \lambda \in [0.25, 128]$, $ N \sim \mathsf{NB}(1,1)$, and $ T \sim \mathsf{Exp}(1)$. Since the coefficient of variation  (\ref{eq:TL_CV}) is proportional to $\lambda^{-1/2}$, it is not displayed for small values of $\lambda$. We see that all four indices are strictly decreasing as a function of $\lambda$. For $\lambda \leq 1$, $\mu(3) \leq 1$ so the Hoover index and $ 1 - \text{prevalence} $. The Hoover index appears to display some discontinuity in the first derivative at points where the expectation is integer valued. This behaviour is less apparent at larger values of $\lambda$. 

\begin{figure}[h] 
    \centering
    \includegraphics{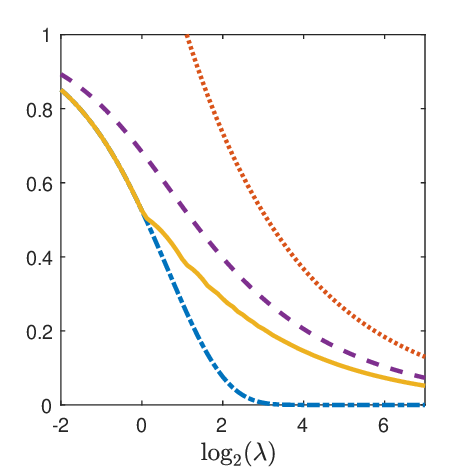}
    \caption{Plot of Hoover index (yellow solid line), Gini index (purple dashed line), $ 1 - \text{prevalence} $ (blue dot-dashed line), and coefficient of variation (orange dotted line) for a host aged 3 in the Tallis-Leyton model with  $ N \sim \mathsf{NB}(1,1)$, and $ T \sim \mathsf{Exp}(1)$.}
    \label{fig:Infection}
\end{figure}

\subsection{Distribution of $N$}

We now consider the role of the distribution of the number of parasites that enter the host during an infective contact. As a concrete example, suppose $N \sim \mathsf{NB}(m,k)$. Then the variance-to-mean ratio is
\[
\text{VMR}(M(a)) = 1 + c m (1+1/k),
\]
where $c>0$ is a constant depending on $F_T$. From this expression we see that the variance-to-mean ratio is increasing in $m$ but decreasing in $k$. In contrast, the coefficient of variation of $M(a)$ is decreasing in both $m$ and $k$. The next two results show that the distribution of the host's parasite burden is decreasing in the Lorenz order as functions of both $m$ and $k$. The first of these results requires the distributions being compared to have the same expectation.

\begin{theorem} \label{thm:compN1}
    Suppose $ N \leq_{\rm cx} \tilde{N}$ and all other model parameters are equal. Then $ M(a) \leq_{\rm cx} \tilde{M}(a)$.
\end{theorem}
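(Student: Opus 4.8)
The plan is to lean entirely on the compound Poisson representation of Theorem~\ref{thm:Rep}, which expresses both burdens as random Poisson sums of i.i.d.\ summands that share the \emph{same} outer count $\Lambda(a)$ and the \emph{same} mixing variable $V$. Concretely, since all parameters other than the per-contact law agree, $V$ and $\Lambda$ are unchanged between the two systems, so $M(a) \stackrel{d}{=} \sum_{k=1}^{\Lambda(a)} X_k$ and $\tilde{M}(a) \stackrel{d}{=} \sum_{k=1}^{\Lambda(a)} \tilde{X}_k$, where the $X_k$ are i.i.d.\ copies of $X(N,V)$ and the $\tilde{X}_k$ are i.i.d.\ copies of $X(\tilde{N},V)$. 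The whole statement then reduces to a single convex-order comparison of the per-contact summands, $X(N,V) \leq_{\rm cx} X(\tilde{N},V)$, followed by transfer through the outer sum.

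First I would establish the summand comparison by conditioning on $V=v$. Given $V=v$, the variable $X(N,v) = \sum_{j=1}^{N} \mathbb{I}(U_j \leq v)$ is a $v$-thinning of $N$, i.e.\ a random sum of i.i.d.\ $\mathsf{Ber}(v)$ indicators with random count $N$. Because $N \leq_{\rm cx} \tilde{N}$ (so in particular $\mathbb{E} N = \mathbb{E}\tilde{N}$) and the Bernoulli summands are nonnegative and identically distributed, the adapted closure under random sums, \citet[Theorem 3.A.13]{SS:07}, gives $X(N,v) \leq_{\rm cx} X(\tilde{N},v)$ for every $v \in [\bar{F}_T(a),1]$. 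Since the mixing law of $V$ is identical in both systems, closure of the convex order under mixtures upgrades this to $X(N,V) \leq_{\rm cx} X(\tilde{N},V)$; the required equality of means holds by independence of $N$ and $V$, as $\mathbb{E} X(N,V) = \mathbb{E} V\,\mathbb{E} N = \mathbb{E} V\,\mathbb{E}\tilde{N} = \mathbb{E} X(\tilde{N},V)$.

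Finally I would push this comparison through the Poisson sum. The two burdens are sums of i.i.d.\ copies of $X(N,V)$ and $X(\tilde{N},V)$ over the common count $\Lambda(a)$, so combining closure under convolutions with closure under mixtures over $\Lambda(a)$ (that is, closure under random sums with a fixed random count) yields $\sum_{k=1}^{\Lambda(a)} X_k \leq_{\rm cx} \sum_{k=1}^{\Lambda(a)} \tilde{X}_k$, which is precisely $M(a) \leq_{\rm cx} \tilde{M}(a)$.

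The step I expect to be the crux is the conditional thinning comparison $X(N,v) \leq_{\rm cx} X(\tilde{N},v)$, since that is where the hypothesis $N \leq_{\rm cx} \tilde{N}$ is actually consumed and where the hypotheses of \citet[Theorem 3.A.13]{SS:07} must be checked with care: the two summand sequences are identically distributed (trivially true here), and, crucially, the summands are nonnegative, which holds only because the $\mathbb{I}(U_j \leq v)$ are $\{0,1\}$-valued. Everything else is routine bookkeeping with the closure properties already catalogued in Section~2.
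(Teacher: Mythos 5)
Your proposal is correct and follows essentially the same route as the paper's own proof: apply \citet[Theorem 3.A.13]{SS:07} conditionally on $V=v$ to get $X(N,v) \leq_{\rm cx} X(\tilde{N},v)$, lift this by closure under mixtures to $X(N,V) \leq_{\rm cx} X(\tilde{N},V)$, and then push through the Poisson random sum and invoke Theorem~\ref{thm:Rep}. The extra checks you include (equality of means, nonnegativity of the Bernoulli summands) are sound and only make the argument more explicit than the paper's version.
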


\begin{proof} 
    Using an extension of the closure under random sums property of the convex order \citet[Theorem 3.A.13]{SS:07}, 
    \[
    X(N,v) \leq_{\rm cx} X(\tilde{N},v).    
    \]
    As the convex order is closed under mixtures, $ X(N,V) \leq_{\rm cx} X(\tilde{N},V)$. 
    Let $ X_{1}, X_{2}, \ldots$ be a sequence of independent random variables having the same distribution as $X(N,V)$ and let $ \tilde{X}_{1}, \tilde{X}_{2},\ldots $ be a sequence of independent random variables having the same distribution as $X(\tilde{N},V)$. As the convex order is closed under random sums, 
    \[
    \sum_{k=1}^{\Lambda(a)} X_{k} \leq_{\rm cx} \sum_{k=1}^{\Lambda(a)} \tilde{X}_{k}. 
    \]
    Theorem \ref{thm:Rep} shows $ M(a) \leq_{\rm cx} \tilde{M}(a)$.
\end{proof}

For distributions with different means, we consider only the case where $N$ and $\tilde{N}$ are related by binomial thinning. Recall that if $G_{\tilde{N}}(z) = G_{N}(1-p+pz)$ for some $p \in (0,1)$, then $ \tilde{N} \leq_{\rm Lorenz} N$.

\begin{theorem} \label{thm:compN2}
    Suppose that $G_{\tilde{N}}(z) = G_{N}(1 - p + pz)$ for some $ p \in (0,1)$ and all other model parameters are equal. Then $ M(a) \leq_{\rm Lorenz} \tilde{M}(a)$.  
\end{theorem}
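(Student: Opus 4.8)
The plan is to recognise that $\tilde{M}(a)$ is exactly the binomial thinning of $M(a)$ with the same parameter $p$, and then to invoke the binomial-thinning result for the Lorenz order recorded in Section 2, namely that if $G_{\tilde{K}}(z) = G_{K}(1-p+pz)$ then $K \leq_{\rm Lorenz} \tilde{K}$. First I would compute $G_{\tilde{M}}(z;a)$ by inserting the hypothesis $G_{\tilde{N}}(w) = G_{N}(1-p+pw)$ into the Tallis--Leyton PGF. The single algebraic point to check is that the thinning map $w \mapsto 1-p+pw$ composes with the affine argument $w = 1 + \bar{F}_{T}(a-s)(z-1)$ appearing in $G_{M}(z;a)$ so that the thinning parameter simply multiplies the $\bar{F}_{T}$ factor: $G_{\tilde{N}}(1 + \bar{F}_{T}(a-s)(z-1)) = G_{N}(1 + p\,\bar{F}_{T}(a-s)(z-1))$. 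Substituting this and comparing with $G_{M}(1-p+pz;a)$, whose $G_{N}$-argument is $1 + \bar{F}_{T}(a-s)\,p(z-1)$, gives $G_{\tilde{M}}(z;a) = G_{M}(1-p+pz;a)$. This identity says precisely that $M(a)$ and $\tilde{M}(a)$ are related by binomial thinning with parameter $p$, and applying the Section 2 fact with $K = M(a)$ then yields $M(a) \leq_{\rm Lorenz} \tilde{M}(a)$ immediately.

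The same conclusion can be reached within the compound Poisson representation, which is more in keeping with the style of the rest of the section. Composition of thinnings gives $X(\tilde{N},v) \stackrel{d}{=} X(N,pv)$, so each summand $X(N,V)$, when thinned independently with retention probability $p$, has the law of $X(N,pV) \stackrel{d}{=} X(\tilde{N},V)$. Since thinning the whole random sum $\sum_{k=1}^{\Lambda(a)} X_{k}$ with probability $p$ is the same as thinning each summand, Theorem \ref{thm:Rep} shows that $\tilde{M}(a)$ is the $p$-thinning of $M(a)$, after which the same Lorenz fact applies.

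The main obstacle is slight: it is only the bookkeeping needed to confirm that a thinning of $N$ lifts to a thinning of $M(a)$ with the same, \emph{age-independent}, parameter $p$ rather than some $s$-dependent quantity. This is exactly what the affine composition noted above guarantees, so beyond that observation there is no real difficulty. I would, however, keep the standing assumption that $T$ has a continuous distribution (as in Theorem \ref{thm:Rep}) so that the representation is available, and note that the argument does not require $N$ and $\tilde{N}$ to share a mean, which is why only a Lorenz-order (rather than convex-order) conclusion is claimed here.
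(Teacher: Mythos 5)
Your argument is correct, and your primary route is genuinely different from the paper's. The paper works at the level of the summands in the compound Poisson representation: it writes $X(\tilde{N},v)$ as a doubly-thinned sum, establishes $p\,X(N,v) \leq_{\rm cx} X(\tilde{N},v)$ via closure under mixtures and random sums, and then propagates this through the random sum exactly as in the proof of Theorem \ref{thm:compN1} to get $p\,M(a) \leq_{\rm cx} \tilde{M}(a)$, which is the Lorenz statement since $\mathbb{E}\tilde{M}(a) = p\,\mathbb{E}M(a)$. You instead verify the global PGF identity $G_{\tilde{M}}(z;a) = G_{M}(1-p+pz;a)$ — your affine-composition computation is right — so that $\tilde{M}(a)$ is itself a $p$-thinning of $M(a)$, and then invoke the Section 2 fact that binomial thinning decreases a distribution in the Lorenz order. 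The two proofs rest on the same underlying mechanism (that fact from \citet[Section 3]{ML:2020} is itself proved via $pK \leq_{\rm cx}$ the thinning of $K$), but yours applies it once at the top level rather than rebuilding it summand by summand. What your first route buys is economy and slightly greater generality: it needs neither Theorem \ref{thm:Rep} nor the continuity assumption on $T$, so you could in fact drop that standing assumption rather than retain it as you suggest. What the paper's route buys is uniformity of method with the adjacent theorems, and it exposes the summand-level comparison $p\,X(N,v) \leq_{\rm cx} X(\tilde{N},v)$ that is reused in spirit elsewhere in the section. Your second sketch, via $X(\tilde{N},v) \stackrel{d}{=} X(N,pv)$ and thinning of the random sum, is essentially the paper's argument reorganised and is also fine.
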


\begin{proof}
    Let $ U_1, U_2, \dots $ and $U'_1, U'_2, \ldots$ be independent standard uniform random variables. Then standard conditioning arguments show
    \begin{align*}
        X(\tilde{N},v) \stackrel{d}{=} \sum_{j=1}^{N} \mathbb{I}\left(U_{j} \leq v \right) \mathbb{I}\left(U'_{j} \leq p \right).
    \end{align*}
    As the convex order is closed under mixtures, 
    \[
    p\, \mathbb{I}\left(U_{j} \leq v \right) \leq_{\rm cx} \mathbb{I}\left(U_{j} \leq v \right) \mathbb{I}\left(U'_{j} \leq p \right).
    \]
    As the convex order is closed under random sums, $ p X(N,v) \leq_{\rm cx} X(\tilde{N},v)$. Following the same arguments as in the proof of Theorem \ref{thm:compN1}, we see $ p M(a) \leq_{\rm cx} \tilde{M}(a)$. Hence, $ M(a) \leq_{\rm Lorenz} \tilde{M}(a)$.    
\end{proof}

As noted previously, $ \mathsf{NB}(k,m) \leq_{\rm Lorenz} \mathsf{NB}(\tilde{k},\tilde{m})$ if $ \tilde{k} \leq k $ and $ \tilde{m} \leq m$. If $N \sim \mathsf{NB}(k,m)$, $\tilde{N} \sim \mathsf{NB}(\tilde{k},\tilde{m})$ and all other model parameters are equal, then Theorems \ref{thm:compN1} and \ref{thm:compN2} together imply that $M(a) \leq_{\rm Lorenz} \tilde{M}(a)$. Figure  \ref{Fig:NB} shows the Hoover and Gini indices as functions of the negative binomial parameters $m$ and $k$ for a parasite host system with host aged 10, rate of infective contacts 5, and parasite lifetimes following an exponential distribution with mean 1. Both indices are decreasing in both $m$ and $k$ as we expect from the above results. The contours of the  Hoover index display some discontinuity in the first derivative for $m = 1/5$ ($\log_2(m) \approx -2.3$), which corresponds to a mean of 1 for the host. The contours for both the Hoover and Gini indices tend to become parallel to the respective axes as $m \to \infty$ and $k \to \infty$. This is a consequence of the limiting behaviour of the negative binomial distribution \citep{AdlC:94}.

\begin{figure}[h] 
    \centering
    \includegraphics{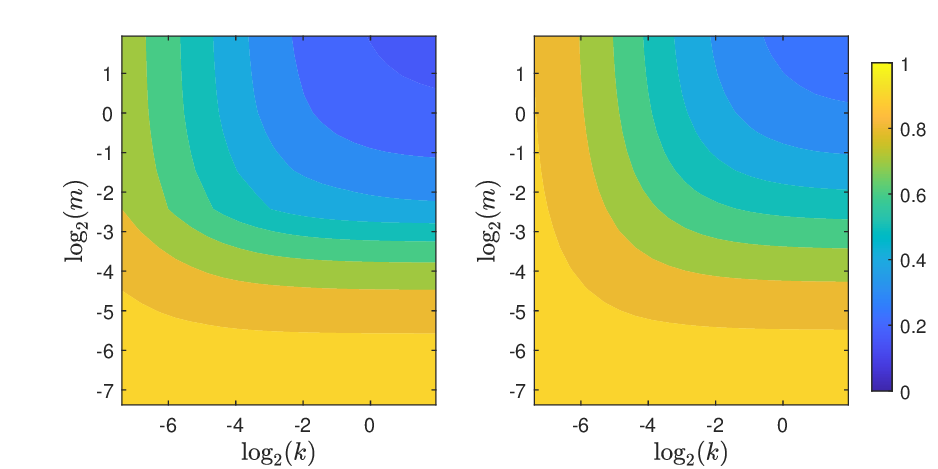}
    \caption{Contour plots showing Hoover index (Left) and Gini index (Right) for a host aged 10 in the Tallis-Leyton model with $\lambda = 5$, $T \sim \mathsf{Exp}(1)$ and $ N \sim \mathsf{NB}(m,k)$.}
    \label{Fig:NB}
\end{figure}

It is natural to consider which distribution for $N$ results in the least aggregated distribution for the host's parasite burden. This requires determining the smallest distribution in the convex ordering. The convex order requires that the distributions compared have the same expected value so let $ n = \mathbb{E} N$. Define the random variable $ N'$ such that
\begin{align*}
\mathbb{P}(N' = \lfloor n \rfloor) & = \lfloor n \rfloor + 1 - n     & \mathbb{P}(N' = \lfloor n \rfloor +1) & = n - \lfloor n \rfloor.
\end{align*}
In the supplementary material of \citet{ML:2020} it was shown that $N' \leq_{\rm cx} N$ so $N'$ is smallest distribution in convex order with expectation $n$. When $n \leq 1$, the smallest distribution in convex order for $N$ leads to $M(a)$ having a Poisson distribution. There is no largest distribution in the convex order.

\subsection{Host age}

Differentiating (\ref{eq:TL_VMR}) with respect to age shows the variance-to-mean ratio is a decreasing function
of age. Since the expected parasite burden is increasing in age, the coefficient of variation is also decreasing in age. The following result shows the host's parasite burden is decreasing in the Lorenz order as a function of age.

\begin{theorem} \label{thm:compAge}
    If $ \tilde{a} <a $, then $  M(a)  \leq_{\rm Lorenz} M(\tilde{a})$.
\end{theorem}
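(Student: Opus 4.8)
The plan is to turn the Lorenz statement into a convex‑order statement and then feed it through the compound Poisson representation of Theorem~\ref{thm:Rep}. Write $\kappa = \tilde a/a \in (0,1)$ and $c = \mu(\tilde a)/\mu(a) \in (0,1)$, and let $V_a$, $V_{\tilde a}$ denote the mixing variable $V$ of Theorem~\ref{thm:Rep} at ages $a$ and $\tilde a$. Since $\mathbb{E}[cM(a)] = c\,\mu(a) = \mu(\tilde a) = \mathbb{E}[M(\tilde a)]$, dividing through by $\mu(\tilde a)$ shows that $cM(a) \leq_{\rm cx} M(\tilde a)$ is equivalent to $M(a)/\mu(a) \leq_{\rm cx} M(\tilde a)/\mu(\tilde a)$, i.e. to $M(a) \leq_{\rm Lorenz} M(\tilde a)$. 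So I would prove the convex‑order version $cM(a) \leq_{\rm cx} M(\tilde a)$.

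First I would realise both burdens on a \emph{common} Poisson count. By Theorem~\ref{thm:Rep}, $M(a) \stackrel{d}{=} \sum_{k=1}^{\Lambda(a)} X(N_k, V_{a,k})$, and thinning each contact independently with probability $\kappa$ gives $M(\tilde a) \stackrel{d}{=} \sum_{k=1}^{\Lambda(a)} B_k\, X(N_k, V_{\tilde a,k})$ with $B_k \sim \mathsf{Ber}(\kappa)$, because the retained contacts again form a Poisson process, now of rate $\lambda\tilde a$. As the convex order is closed under random sums over the shared $\Lambda(a)$, it suffices to prove the single‑contact comparison $cX(N,V_a) \leq_{\rm cx} B\,X(N,V_{\tilde a})$; closing under mixtures over $N$ then reduces this to the fixed‑$n$ statement $cX(n,V_a) \leq_{\rm cx} B\,X(n,V_{\tilde a})$ for every $n$. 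A short computation gives $c\,\mathbb{E}V_a = \kappa\,\mathbb{E}V_{\tilde a}$, so both sides share the mean $cn\,\mathbb{E}V_a$ for each $n$, exactly as the convex order demands.

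For the fixed‑$n$ comparison I would use the survival‑function criterion \citep[Theorem 3.A.44(b)]{SS:07}: with equal means it is enough that $\bar{F}_{cX(n,V_a)} - \bar{F}_{BX(n,V_{\tilde a})}$ has a single sign change, in the order $+,-$. At the level of the success probability this is transparent: from (\ref{eq:V}) one computes that $\bar{F}_{cV_a}-\bar{F}_{BV_{\tilde a}}$ is positive near $0$ (where it equals $1-\kappa$) and, in the bulk, equals $a^{-1}\bigl[\bar{F}^{-1}_{T}(t/c) - \bar{F}^{-1}_{T}(t)\bigr]$, which is negative because $c<1$, $\bar{F}^{-1}_{T}$ is decreasing, and the identity $\kappa/\tilde a = 1/a$ makes the two coefficients cancel; hence a single $+,-$ crossing at the probability level. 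The hard part will be transferring this single crossing from the success probabilities to the binomial counts $X(n,\cdot)$ in the presence of the outer scaling $c$ and thinning $B$ (which place the two laws on the different lattices $c\mathbb{Z}_{\geq 0}$ and $\mathbb{Z}_{\geq 0}$). I expect to control this through the variation‑diminishing (total positivity) property of the binomial kernel $v \mapsto \mathsf{Bin}(n,v)$, which can only decrease the number of sign changes; combined with the matched means this should force the $+,-$ pattern and deliver $cX(n,V_a) \leq_{\rm cx} BX(n,V_{\tilde a})$. Showing that exactly one crossing persists after the scaling and thinning, for every $n$, is the delicate point on which the whole argument rests.
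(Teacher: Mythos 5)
Your reduction is the same as the paper's: pass to the convex order via $cM(a)\leq_{\rm cx} M(\tilde a)$ with $c=\mu(\tilde a)/\mu(a)$, realise both burdens on the common count $\Lambda(a)$ by thinning with probability $\kappa=\tilde a/a$, and reduce to a single-contact comparison. The gap is exactly the step you flag as ``delicate'': it is not merely delicate, the rescue you propose does not apply. The variation-diminishing property of the binomial kernel controls sign changes of $j\mapsto\int \mathsf{Bin}(n,v)(j)\,d(\mu_1-\mu_2)(v)$, i.e.\ it requires both laws to be mixtures of the \emph{same} kernel $v\mapsto X(n,v)$ over two mixing measures. But $cX(n,v)$ is not of that form for $0<c<1$: scaling the count by $c$ is not thinning (it moves the support to $c\{0,\dots,n\}$), so $cX(n,V_a)$ is not $X(n,W)$ for any $W$, and there is no common kernel through which total positivity can act. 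Relatedly, your ``transparent'' crossing computation compares $\bar F_{cV_a}$ with $\bar F_{BV_{\tilde a}}$, but $cV_a$ is not the success probability of the law $cX(n,V_a)$; the identity $bX(n,v)=X(n,bv)$ holds only for $b\in\{0,1\}$, which is precisely why a deterministic $c$ cannot be pushed inside.

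The paper's proof closes this with two moves you are missing. First, it never scales by the constant $c$ directly: since $c\leq_{\rm cx}\mathsf{Ber}(c)$ and the convex order is closed under mixtures, $cX(N,V_a)\leq_{\rm cx} B'X(N,V_a)$ with $B'\sim\mathsf{Ber}(c)$ independent, and now $B'X(N,V_a)=X(N,B'V_a)$ lives back on the integer lattice inside the binomial family. Second, it proves the single $+,-$ crossing at the level of the success probabilities, $\bar F_{B'V_a}-\bar F_{\tilde B V_{\tilde a}}$ with $\tilde B\sim\mathsf{Ber}(\kappa)$ (Lemma \ref{lem:bv}; this is essentially the computation you already did, with $B'$ in place of the scalar $c$), giving $B'V_a\leq_{\rm cx}\tilde BV_{\tilde a}$. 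The lift to the counts is then not done by tracking crossings of the count distributions at all: Lemma \ref{lem:phi} shows $v\mapsto\mathbb{E}\phi(X(N,v))$ is convex for every convex $\phi$ (via the exponential-family result of Schweder), and \citet[Theorem 3.A.21]{SS:07} converts $B'V_a\leq_{\rm cx}\tilde BV_{\tilde a}$ into $X(N,B'V_a)\leq_{\rm cx} X(N,\tilde BV_{\tilde a})$. If you substitute these two ingredients for your total-positivity step, the rest of your argument goes through unchanged.
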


The proof is built from the following lemmas. 

\begin{lemma} \label{lem:bv}
Let $V$ have the distribution (\ref{eq:V}) and let $\tilde{V}$ have the distribution (\ref{eq:V}) with $a$ replaced by $\tilde{a}$. Let $ B \sim \mathsf{Ber}(\mu(\tilde{a})/\mu(a))$ independent of $V$, and let $\tilde{B} \sim \mathsf{Ber}(\tilde{a}/a)$ independent of $\tilde{V}$. Then $BV \leq_{\rm cx} \tilde{B} \tilde{V}$.
\end{lemma}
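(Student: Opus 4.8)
The plan is to establish the ordering through the survival-function criterion quoted in the background, namely \citet[Theorem 3.A.44(b)]{SS:07}. Since any $\leq_{\rm cx}$ comparison presupposes equal means, I would first confirm $\mathbb{E}[BV] = \mathbb{E}[\tilde{B}\tilde{V}]$, and then show that $\bar{F}_{BV} - \bar{F}_{\tilde{B}\tilde{V}}$ has a single sign change with sign sequence $+,-$.

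For the means, the substitution $v = \bar{F}_{T}(a-s)$ used in the proof of Theorem \ref{thm:Rep} turns $F_{V}$ into $s/a$, giving $\mathbb{E}V = a^{-1}\int_{0}^{a}\bar{F}_{T}(s)\,ds$ and likewise $\mathbb{E}\tilde{V} = \tilde{a}^{-1}\int_{0}^{\tilde{a}}\bar{F}_{T}(s)\,ds$. Since $\mu(a) = \lambda\mathbb{E}N\int_{0}^{a}\bar{F}_{T}(s)\,ds$, both $\mathbb{E}[BV] = (\mu(\tilde{a})/\mu(a))\,\mathbb{E}V$ and $\mathbb{E}[\tilde{B}\tilde{V}] = (\tilde{a}/a)\,\mathbb{E}\tilde{V}$ collapse to $\mu(\tilde{a})/(a\lambda\mathbb{E}N)$, so the means agree.

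For the sign change, abbreviate $p = \mu(\tilde{a})/\mu(a)$ and $q = \tilde{a}/a$. Because $V,\tilde{V} > 0$ almost surely, the Bernoulli mixture structure gives $\bar{F}_{BV}(x) = p\,\mathbb{P}(V > x)$ and $\bar{F}_{\tilde{B}\tilde{V}}(x) = q\,\mathbb{P}(\tilde{V} > x)$ for $x \geq 0$, where (\ref{eq:V}) yields $\mathbb{P}(V > x) = a^{-1}\bar{F}_{T}^{-1}(x)$ on the support $[\bar{F}_{T}(a),1]$ (and $1$ below it), and analogously for $\tilde{V}$ on $[\bar{F}_{T}(\tilde{a}),1]$. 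I would then examine $D(x) := \bar{F}_{BV}(x) - \bar{F}_{\tilde{B}\tilde{V}}(x)$ on the three ranges delineated by $\bar{F}_{T}(a) < \bar{F}_{T}(\tilde{a}) < 1$: on $[0,\bar{F}_{T}(a))$ both survival factors equal $1$, so $D = p - q$; on $[\bar{F}_{T}(a),\bar{F}_{T}(\tilde{a}))$ one has $D(x) = p\,a^{-1}\bar{F}_{T}^{-1}(x) - q$, which is decreasing since $\bar{F}_{T}^{-1}$ is decreasing; and on $[\bar{F}_{T}(\tilde{a}),1]$ one finds $D(x) = \bar{F}_{T}^{-1}(x)\,(p-1)/a \leq 0$, vanishing at $x = 1$.

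The crux is the inequality $p \geq q$, equivalently $\mu(\tilde{a})/\mu(a) \geq \tilde{a}/a$. This holds because $\bar{F}_{T}$ is decreasing, so its running average $g(a) := a^{-1}\int_{0}^{a}\bar{F}_{T}(s)\,ds$ satisfies $g'(a) = a^{-1}(\bar{F}_{T}(a) - g(a)) \leq 0$; writing $p = q\,g(\tilde{a})/g(a)$ then gives $p \geq q$. With $p \geq q$ and $p < 1$, the first range keeps $D \geq 0$, the second decreases monotonically from $p - q \geq 0$ to $q(p-1) < 0$ and hence crosses zero exactly once, and the third stays $\leq 0$; continuity at the breakpoints $\bar{F}_{T}(a)$ and $\bar{F}_{T}(\tilde{a})$ rules out any further crossing. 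Thus $D$ realises the single sign change $+,-$, and the criterion delivers $BV \leq_{\rm cx} \tilde{B}\tilde{V}$. I expect the main obstacle to be the careful bookkeeping across the middle range where the supports of $V$ and $\tilde{V}$ differ, together with pinning down $p \geq q$; the remaining steps are routine substitutions.
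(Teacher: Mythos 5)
Your proposal is correct and follows essentially the same route as the paper: verify the means agree, write out the survival functions of $BV$ and $\tilde{B}\tilde{V}$ piecewise over the three ranges determined by $\bar{F}_{T}(a) < \bar{F}_{T}(\tilde{a}) < 1$, establish $\tilde{a}/a < \mu(\tilde{a})/\mu(a) < 1$ (your running-average argument is just an expanded justification of the paper's observation that $\mu(a)/a$ is decreasing while $\mu(a)$ is increasing), and conclude via the single-crossing criterion of \citet[Theorem 3.A.44]{SS:07}. No gaps.
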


\begin{proof}
Note that 
\[
\mathbb{E} V = \frac{1}{a} \int_{0}^{a} \bar{F}_{T}(s) ds
\]
so $ \mathbb{E} BV = \mathbb{E} \tilde{B}\tilde{V} $. We show that $BV \leq_{\rm cx} \tilde{B} \tilde{V}$ by examining the sign changes of $\bar{F}_{BV} - \bar{F}_{\tilde{B}\tilde{V}}$. The survival functions of $ BV$ and $\tilde{B}\tilde{V}$ are
\begin{align*}
    \bar{F}_{BV}(w) & = \left\{ \begin{array}{ll}
    \frac{\mu(\tilde{a})}{\mu(a)}, & w \in [0,\bar{F}_{T}(a)) \\
    \frac{\mu(\tilde{a})}{a\, \mu(a)} \bar{F}^{-1}_{T}(w), & w \in [\bar{F}_{T}(a),1) \\
    0, & w > 1,
    \end{array} \right. 
\end{align*}
and 
\begin{align*}
    \bar{F}_{\tilde{B}\tilde{V}}(w) & = \left\{ \begin{array}{ll}
    \frac{\tilde{a}}{a}, & w \in [0,\bar{F}_{T}(\tilde{a})) \\
    \frac{1}{a} \bar{F}^{-1}_{T}(w), & w \in [\bar{F}_{T}(\tilde{a}),1) \\
    0, & w > 1.
    \end{array} \right.
\end{align*}
 Since $\mu(a)$ is increasing in $a$ and $\mu(a)/a$ is decreasing in $a$, 
\[
\frac{\tilde{a}}{a} < \frac{\mu(\tilde{a})}{\mu(a)} <  1.
\]
Hence, $\bar{F}_{BV}(w) - \bar{F}_{\tilde{B}\tilde{V}}(w) > 0$ for all $ w \in [0, \bar{F}_{T}(a)]$. On $[\bar{F}_{T}(a), \bar{F}_{T}(\tilde{a})]$, $ \bar{F}_{\tilde{B}\tilde{V}}(w) = \tilde{a}/a$ whereas $\bar{F}_{BV}$ decreases from $\mu(\tilde{a})/\mu(a) $ to $ \tilde{a} \mu(\tilde{a}) /a \mu(a) < \tilde{a}/a$. For all $ w \geq \bar{F}_{T}(\tilde{a})$,  $\bar{F}_{BV}(w) - \bar{F}_{\tilde{B}\tilde{V}}(w) < 0$.  Hence, $ \bar{F}_{BV} - \bar{F}_{\tilde{B}\tilde{V}}$ has a single sign change and the sign sequence is $ +, -$. Hence,   $BV \leq_{\rm cx} \tilde{B}\tilde{V}$ \cite[Theorem 3.A.44]{SS:07}.
\end{proof}

\begin{lemma} \label{lem:phi}
    For any convex function $\phi$ and any non-negative integer valued random variable $N$ that is independent of $U_{1},U_{2},\ldots$, $\mathbb{E}\phi(X(N,v))$ is a convex function in $v$
\end{lemma}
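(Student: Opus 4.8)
The plan is to condition on $N$ and reduce the claim to a one-parameter binomial computation. Writing $g_n(v) := \mathbb{E}\,\phi(X(n,v))$, the independence of $N$ and the $U_k$ gives $\mathbb{E}\,\phi(X(N,v)) = \sum_{n\ge 0} \mathbb{P}(N=n)\, g_n(v)$. Since a nonnegative (here probability) weighted combination of convex functions is convex, and since convexity is preserved under pointwise limits of the partial sums, it suffices to prove that each $g_n$ is convex in $v$; the interchange and the limiting step are legitimate at every $v$ for which $\mathbb{E}\,\phi(X(N,v))$ is finite, which is all that is required.

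Next I would fix $n$ and exploit that $g_n(v)=\sum_{j=0}^n \phi(j)\binom{n}{j} v^j (1-v)^{n-j}$ is a polynomial in $v$, so differentiability presents no difficulty even when $\phi$ is not smooth, as only the values $\phi(0),\dots,\phi(n)$ enter. The key computation is the elementary identity $\frac{d}{dv}\mathbb{P}(X(n,v)=j) = n\bigl[\mathbb{P}(X(n-1,v)=j-1) - \mathbb{P}(X(n-1,v)=j)\bigr]$, which follows from differentiating the binomial mass function together with the relations $n\binom{n-1}{j-1}=j\binom{n}{j}$ and $n\binom{n-1}{j}=(n-j)\binom{n}{j}$. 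Summing against $\phi$ and reindexing converts each differentiation into a forward difference: writing $\Delta\phi(j)=\phi(j+1)-\phi(j)$, one obtains $g_n'(v)= n\,\mathbb{E}\,\Delta\phi(X(n-1,v))$ and, applying the identity a second time, $g_n''(v)= n(n-1)\,\mathbb{E}\,\Delta^2\phi(X(n-2,v))$, where $\Delta^2\phi(j)=\phi(j+2)-2\phi(j+1)+\phi(j)$.

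Convexity of $\phi$ then supplies the sign: for every integer $j$, midpoint convexity gives $\phi(j+1)\le \tfrac12\bigl(\phi(j)+\phi(j+2)\bigr)$, that is, the second forward difference satisfies $\Delta^2\phi(j)\ge 0$. Hence $g_n''(v)\ge 0$ for all $v$, so each $g_n$ is convex, and the mixture over $N$ is convex, finishing the argument. I expect the only subtlety, rather than a genuine obstacle, to be the bookkeeping for unbounded $N$: this is handled by observing that the partial sums $\sum_{n\le M}\mathbb{P}(N=n)\,g_n$ are convex and passing to the (finite) pointwise limit, convexity being closed under such limits. The real content of the proof is the second-difference representation of $g_n''$, which translates analytic convexity in $v$ into the discrete convexity $\Delta^2\phi\ge 0$ of $\phi$.
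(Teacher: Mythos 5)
Your proof is correct, but it takes a different route from the paper's. The paper disposes of the fixed-$n$ case in one line by citing \citet[Proposition 2]{Schweder:1982}: the binomial family $\mathsf{Bin}(n,v)$ is a regular exponential family whose mean is linear in $v$, and that proposition states that expectations of convex functions are then convex in the parameter. You instead prove the fixed-$n$ case from scratch via the identity $\tfrac{d}{dv}\mathbb{P}(X(n,v)=j)=n\bigl[\mathbb{P}(X(n-1,v)=j-1)-\mathbb{P}(X(n-1,v)=j)\bigr]$, obtaining $g_n''(v)=n(n-1)\,\mathbb{E}\,\Delta^2\phi(X(n-2,v))\ge 0$ from the nonnegativity of second differences of a convex $\phi$; your computation checks out (including the degenerate cases $n=0,1$, where $g_n$ is affine). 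Both arguments then finish identically by taking the nonnegative mixture over the law of $N$, with your extra remark about passing to the pointwise limit of the partial sums being a harmless refinement. What your version buys is a self-contained, elementary proof that makes explicit exactly which property of $\phi$ is used (discrete convexity $\Delta^2\phi\ge 0$); what the paper's version buys is brevity and a pointer to a general principle that applies beyond the binomial family. Either is acceptable.
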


\begin{proof}
    As the binomial distribution $ \mathsf{Bin}(n,v)$ is a regular exponential family of distribution with expectation linear in $v$, \citet[Proposition 2]{Schweder:1982} implies
    $ \mathbb{E} \phi(X(n,v) $ is convex in $v$ for any positive integer $ n$. As non-negative weighted sums of convex functions are also convex, it follows that $\mathbb{E}\phi(X(N,v))$ is a convex function in $v$.
\end{proof}

\begin{proof}[Proof of Theorem \ref{thm:compAge}]
By construction of $X(n,v)$, if $b$ takes values in $\{0,1\}$, then $b X(n,v) = X(n,bv)$. Applying \citet[Theorem 3.A.21]{SS:07} with Lemmas \ref{lem:bv} and \ref{lem:phi}, 
\[
BX(N,V) = X(N,BV) \leq_{\rm cx} X(N, \tilde{B} \tilde{V}) = \tilde{B} X(N, \tilde{V}).
\]
Since the convex order is transitive and closed under mixtures,
\[
  \frac{\mu(\tilde{a})}{\mu(a)}  X(N,V) \leq_{\rm cx} B X(N,V) \leq_{\rm cx} \tilde{B} X(N, \tilde{V}).
\]
In the notation of Theorem \ref{thm:Rep}, $M(a) \stackrel{d}{=} \sum_{k=1}^{\Lambda(a)} X_{k}$, where $X_1,X_2,\ldots$ is a sequence of independent random variables with $X_k \stackrel{d}{=} X(N,V)$.  From the thinning property of the Poisson process and Theorem \ref{thm:Rep}, we can write $ M(\tilde{a}) \stackrel{d}{=} \sum_{k=1}^{\Lambda(a)} \tilde{B}_{k} \tilde{X}_{k}$, where $\tilde{X}_{1},\tilde{X}_{2},\ldots$ is a sequence of independent random variables with $\tilde{X}_{k}\stackrel{d}{=} X(N,\tilde{V})$ and $\tilde{B}_{1},\tilde{B}_{2},\ldots$ is a sequence of independent $\mathsf{Ber}(\tilde{a}/a)$ random variables that are also independent of the $\tilde{X}_{k}$. As the convex order is closed under random sums, 
\[
\frac{\mu(\tilde{a})}{\mu(a)} M(a) \stackrel{d}{=} \frac{\mu(\tilde{a})}{\mu(a)} \sum_{k=1}^{\Lambda(a)} X_{k} \leq_{\rm cx} \sum_{k=1}^{\Lambda(a)} B_{k} X_{k} \leq_{\rm cx} \sum_{k=1}^{\Lambda(a)} \tilde{B}_{k} \tilde{X}_{k} \stackrel{d}{=} M(\tilde{a})
\]
\end{proof}

\begin{figure}[h] 
    \centering
    \includegraphics{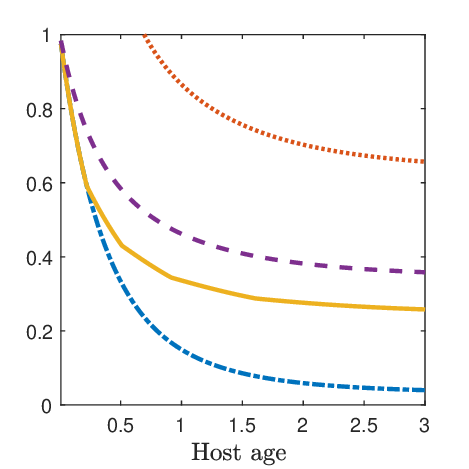}
    \caption{Plot of Hoover index (yellow solid line), Gini index (purple dashed line), $ 1 - \text{prevalence} $ (blue dot-dashed line), and coefficient of variation (orange dotted line) for a host in the Tallis-Leyton model with $\lambda = 5$, $ N \sim \mathsf{NB}(1,1)$, and $ T \sim \mathsf{Exp}(1)$.}
    \label{fig:Age}
\end{figure}

Figure \ref{fig:Age} shows the four indices (Gini, Hoover, $ 1 - \text{prevalence} $, and coefficient of variation) for a host in the Tallis-Leyton model with $\lambda = 5$, $ N \sim \mathsf{NB}(1,1)$, and $ T \sim \mathsf{Exp}(1)$. All four indices are strictly decreasing in host age. As in Figure \ref{fig:Infection}, the Hoover index coincides with $ 1 - \text{prevalence} $ for $\mu(a) \leq 1$ and displays some discontinuity in the first derivative at points where the expectation is integer valued.

\subsection{Parasite lifetime distribution}

Our final comparison result concerns the parasite lifetime distribution. The result shows that variability in the parasite lifetimes decreases the variability in the host's parasite burden. In particular, the result implies that the host's parasite burden is most aggregated when parasites have constant lifetimes.

\begin{theorem}
    Suppose $ \mathbb{E} T = \mathbb{E} \tilde{T}$ and $\bar{F}_{T} - \bar{F}_{\tilde{T}}$ has a single sign change with sign sequence $+, -$ so $T \leq_{\rm cx} \tilde{T}$. Assume all other model parameters are equal, then $\tilde{M}(\infty) \leq_{\rm cx} M(\infty)$.
\end{theorem}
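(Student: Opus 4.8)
The plan is to work directly with the stationary law, because the hypothesis $\mathbb{E}T=\mathbb{E}\tilde{T}$ forces the means of $M(\infty)$ and $\tilde{M}(\infty)$ to agree only in the $a\to\infty$ limit (at any finite age the two means need not coincide, so a finite-age convex comparison is unavailable). First I would extend Theorem \ref{thm:Rep} to $a=\infty$, exhibiting $M(\infty)$ as the weak limit of $M(a)$, hence as an infinitely divisible compound-Poisson-type law whose jumps have the form $X(N,v)$ and whose jump-intensity (L\'evy) measure $\mu_{T}$ on $(0,1]$ satisfies $\mu_{T}((v,1])=\bar{F}^{-1}_{T}(v)$; the same construction with $\tilde{T}$ gives $\mu_{\tilde{T}}$. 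Recording the mean, $\mathbb{E}M(\infty)=\lambda\,\mathbb{E}N\int_{0}^{1}v\,\mu_{T}(dv)=\lambda\,\mathbb{E}N\,\mathbb{E}T$, and likewise for $\tilde{T}$, shows the two stationary burdens share a common mean precisely because $\mathbb{E}T=\mathbb{E}\tilde{T}$. This is where the restriction to $a=\infty$ is used, and what makes a convex rather than merely Lorenz comparison possible.

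Next I would translate the single-crossing hypothesis into a statement about the jump-intensity measures. Writing $v_{0}=\bar{F}_{T}(s_{0})$ for the image of the crossing point $s_{0}$ of $\bar{F}_{T}-\bar{F}_{\tilde{T}}$, the tail difference $D(v)=\mu_{T}((v,1])-\mu_{\tilde{T}}((v,1])=\bar{F}^{-1}_{T}(v)-\bar{F}^{-1}_{\tilde{T}}(v)$ is negative on $(0,v_{0})$ and positive on $(v_{0},1)$, a single sign change $-,+$; moreover $\int_{0}^{1}v\,\mu_{T}(dv)=\int_{0}^{1}v\,\mu_{\tilde{T}}(dv)$, equivalently $\int_{0}^{1}D(v)\,dv=0$. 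In words, passing from $\tilde{T}$ to $T$ transfers jump-intensity from small $v$ (many strongly thinned contributions) to large $v$ (few weakly thinned ones) while holding the mean contribution $\int v\,d\mu$ fixed, a mean-preserving spread at the level of the jump sizes. Intuitively, less variable lifetimes force each batch's surviving fraction $\bar{F}_{T}(S)$ toward the extremes $\{0,1\}$ (it is exactly $\{0,1\}$-valued when $T$ is constant), producing larger, more synchronised jumps.

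The mechanism I would use to convert this into a convex ordering is the following. Since both L\'evy measures are infinite, I would truncate to $v\geq\delta$ and pad the measure of smaller total mass by null jumps at $v=0$ (which, as $X(N,0)=0$, do not alter the law), so that each truncated burden becomes a random sum over a \emph{common} $\mathsf{Poi}(\Theta_{\delta})$ count. With the frequencies equalised, equality of total means forces equality of the jump means, so by Lemma \ref{lem:phi} (convexity of $v\mapsto\mathbb{E}\phi(X(N,v))$) together with \citet[Theorem 3.A.21]{SS:07} it suffices to establish the convex order $\tilde{V}\leq_{\rm cx}V$ of the two thinning variables; closure of the convex order under random sums \citep[Theorem 3.A.13]{SS:07} then transfers this to the truncated burdens, and closure under weak limits with convergent (and, in the limit, equal) means removes the truncation to give $\tilde{M}(\infty)\leq_{\rm cx}M(\infty)$.

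The main obstacle is the comparison $\tilde{V}\leq_{\rm cx}V$, which I would attempt through the survival-function criterion \citep[Theorem 3.A.44]{SS:07} applied to the thinning variables. Unlike the age comparison (Theorem \ref{thm:compAge}), where both relevant survival functions were scalar multiples of the single quantile function $\bar{F}^{-1}_{T}$ and the sign change of Lemma \ref{lem:bv} was automatic, here two distinct lifetime distributions enter, so the survival function governing the jumps need not single-cross after any finite-age mean correction; the equal-area identity $\int_{0}^{1}D(v)\,dv=0$, which is exact only in the $a\to\infty$ limit, is what must be exploited to control it and to guarantee that the jump means coincide. Making this single-crossing step rigorous, and thereby justifying the ``mean-preserving spread of the L\'evy measure implies convex order of the compound Poisson'' passage, either through the frequency-equalised reduction above or through an explicit martingale coupling of $\tilde{M}(\infty)$ into $M(\infty)$, is the technical heart of the argument.
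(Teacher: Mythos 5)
Your proposal has the right architecture and correctly isolates the two essential points: the means of the two burdens agree only at $a=\infty$, and the hypothesis on $\bar{F}_{T}-\bar{F}_{\tilde{T}}$ translates into a single sign change ($-,+$) of the quantile difference $D(v)=\bar{F}^{-1}_{T}(v)-\bar{F}^{-1}_{\tilde{T}}(v)$ together with $\int_{0}^{1}D(v)\,dv=0$. But as written the argument has a genuine gap, and it is not quite where you locate it. If you truncate both L\'evy measures at a \emph{common} level $v\geq\delta$ and pad the lighter one with null jumps, the single-crossing of the resulting jump survival functions actually goes through without difficulty (on $[\delta,1)$ the difference is proportional to $-D$, and the padding only adds mass at $0$). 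What fails is your claim that ``equality of total means forces equality of the jump means'': the truncated means are $\int_{\delta}^{1}v\,\mu_{T}(dv)$ and $\int_{\delta}^{1}v\,\mu_{\tilde{T}}(dv)$, and these differ for every $\delta>0$ because $\int_{0}^{1}D=0$ only over the full interval. With unequal means, \citet[Theorem 3.A.44]{SS:07} does not apply and no convex order between the truncated burdens can hold, so the closure-under-weak-limits step has nothing to pass to the limit.

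The paper's proof supplies exactly the missing device: instead of truncating both systems at a common level, it truncates at finite \emph{ages} $a$ and $\tilde{a}$ chosen so that the means match exactly, i.e.\ $\int_{0}^{a}\bar{F}_{T}(t)\,dt=\int_{0}^{\tilde{a}}\bar{F}_{\tilde{T}}(t)\,dt$ (the sign-change hypothesis gives $\tilde{a}>a$). One then pads the $T$-system with independent $\mathsf{Ber}(a/\tilde{a})$ thinning variables to equalise the Poisson counts at rate $\lambda\tilde{a}$, verifies that $\bar{F}_{\tilde{V}}-\bar{F}_{BV}$ has a single sign change $+,-$ (which is your $-D$ computation plus the constant head pieces), and concludes $\tilde{M}(\tilde{a})\leq_{\rm cx}M(a)$ at every finite $a$ before letting $a\to\infty$. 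So your plan is repairable by replacing the symmetric $\delta$-truncation with this asymmetric, mean-matched truncation; without that replacement, the step you defer as ``the technical heart'' cannot be completed in the form you propose.
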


\begin{proof} Assume  For any $a > 0 $ set $\tilde{a}$ such that 
\[
\int_{0}^{a} \bar{F}_{T}(t) dt = \int_{0}^{\tilde{a}} \bar{F}_{\tilde{T}}(t) dt. 
\]
As $\mathbb{E} T = \mathbb{E} \tilde{T}$, it follows that $\tilde{a} > a$. Let $B \sim \mathsf{Ber}(a/\tilde{a})$. Let $V$ have distribution (\ref{eq:V}) and let $\tilde{V}$ have the distribution (\ref{eq:V}) with $a$ replaced by $\tilde{a}$ and $T$ replaced by $\tilde{T}$. The survival functions of $ \tilde{V} $ and $BV$ are
\begin{align*}
    \bar{F}_{\tilde{V}}(w) & = \left\{ \begin{array}{ll}
    1, & w \in [0,\bar{F}_{\tilde{T}}(\tilde{a})) \\
    \frac{1}{\tilde{a}} \bar{F}^{-1}_{\tilde{T}}(w), & w \in [\bar{F}_{\tilde{T}}(\tilde{a}),1) \\
    0, & w > 1,
    \end{array} \right. 
\end{align*}
and 
\begin{align*}
    \bar{F}_{\tilde{B}\tilde{V}}(w) & = \left\{ \begin{array}{ll}
    \frac{a}{\tilde{a}}, & w \in [0,\bar{F}_{T}(a)) \\
    \frac{1}{\tilde{a}} \bar{F}^{-1}_{T}(w), & w \in [\bar{F}_{T}(a),1) \\
    0, & w > 1.
    \end{array} \right.
\end{align*} 
Since  $\bar{F}_{T} - \bar{F}_{\tilde{T}}$ has a single sign change with sign sequence is $+, -$, it follows that $\bar{F}_{\tilde{V}} - \bar{F}_{BV}$ also has a single sign change with sign sequence $+, -$. Applying  Lemma \ref{lem:phi} and \citet[Theorem 3.A.21]{SS:07} together shows $ X(N, \tilde{V}) \leq_{\rm cx}  X(N,B V)$. From Theorem \ref{thm:Rep}, $\tilde{M}(\tilde{a}) \stackrel{d}{=} \sum_{k=1}^{\Lambda(\tilde{a})} \tilde{X}_{k}$ and $ M(a)  \stackrel{d}{=} \sum_{k=1}^{\Lambda(a)} X_{k}$, where $\tilde{X}_{k} \stackrel{d}{=} X(N, \tilde{V})$ and $X_{k}  \stackrel{d}{=} X(N,V)$. Let $B_{1}, B_{2},\ldots$ be a sequence of independent $\mathsf{Ber}(a/\tilde{a})$ random variables that are  also independent of $X_{1},X_{2},\ldots$ By construction $BX(N,V) = X(N,BV)$. From the thinning property of the Poisson process, $ M(a) \stackrel{d}{=} \sum_{k=1}^{\Lambda(\tilde{a})} B_{k} X_{k}$. As the convex order is closed under random sums, we see $ \tilde{M}(\tilde{a}) \leq_{\rm cx} M(a)$. Letting $ a\to \infty$ and noting that the convex order is closed under weak limits, we see $\tilde{M}(\infty) \leq_{\rm cx} M(\infty)$.
\end{proof}

\subsection{Asymptotic normality}
As noted previously, for the host's parasite burden to converge to a normal distribution, the coefficient of variation must tend to 0. In the Tallis-Leyton model, this is only possible when the rate of infective contacts tends to infinity. 

\begin{theorem}
    Suppose there exists positive constants $\epsilon,\ \delta$ and $C$ such that 
    \[
    \left| G_{N}(1 + \omega) - \left(1 + G_{N}'(1) \omega + \frac{1}{2} G_{N}''(1) \omega^{2} \right) \right| \leq C \left| \omega \right|^{2+\epsilon}
    \]
    for all $\omega \in \mathbb{C}$ such that $|\omega| < \delta$. Then
    \[
    \lim_{\lambda \to \infty} \frac{M(a) - \mu(a)}{\sigma(a)} \stackrel{d}{=} \mathsf{N}(0,1).
    \]
    
\end{theorem}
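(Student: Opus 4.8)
The plan is to prove convergence of characteristic functions and invoke L\'evy's continuity theorem. Since $G_{M}(e^{i\theta};a) = \mathbb{E}\,e^{i\theta M(a)}$, the characteristic function of the standardised burden is
\[
\mathbb{E}\exp\!\left(it\,\frac{M(a)-\mu(a)}{\sigma(a)}\right) = \exp\!\left(-\frac{it\,\mu(a)}{\sigma(a)}\right) G_{M}\!\left(e^{it/\sigma(a)};a\right).
\]
From the mean and variance formulas we have $\mu(a) = \lambda c_{1}$ and $\sigma^{2}(a) = \lambda c_{2}$ for constants $c_{1},c_{2}>0$ depending only on $a$, $F_{T}$ and the first two moments of $N$; in particular $\sigma(a)$ grows like $\sqrt{\lambda}$, so $z_{\lambda} := e^{it/\sigma(a)} \to 1$, while the ratio $\mu(a)/\sigma^{2}(a) = c_{1}/c_{2}$ is constant in $\lambda$. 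The first point I would record is that $\omega := \bar{F}_{T}(a-s)(z_{\lambda}-1)$ is a complex number satisfying $|\omega| \le |z_{\lambda}-1| \le |t|/\sigma(a)$ uniformly in $s\in[0,a]$ (since $\bar{F}_{T}\le 1$), so for $\lambda$ large enough $|\omega|<\delta$ for every $s$ and the Taylor hypothesis on $G_{N}$ is in force throughout the integral.

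Next I would take logarithms using the Tallis--Leyton expression for $G_{M}$ and insert the quadratic expansion
\[
G_{N}(1+\omega)-1 = \mathbb{E}N\,\omega + \tfrac{1}{2}\mathbb{E}N(N-1)\,\omega^{2} + R(\omega), \qquad |R(\omega)|\le C|\omega|^{2+\epsilon},
\]
recalling that $G_{N}'(1)=\mathbb{E}N$ and $G_{N}''(1)=\mathbb{E}N(N-1)$. Integrating the linear and quadratic terms in $s$ and using $\int_{0}^{a}\bar{F}_{T}(a-s)\,ds = \int_{0}^{a}\bar{F}_{T}(s)\,ds$ together with its $\bar{F}_{T}^{2}$ analogue, the linear contribution equals $\mu(a)(z_{\lambda}-1)$ and the quadratic contribution equals $\tfrac{1}{2}(\sigma^{2}(a)-\mu(a))(z_{\lambda}-1)^{2}$, because $\lambda\,\mathbb{E}N(N-1)\int_{0}^{a}\bar{F}_{T}^{2}(s)\,ds = \sigma^{2}(a)-\mu(a)$. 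Combined with the centering factor, the logarithm of the characteristic function reads
\[
-\frac{it\,\mu(a)}{\sigma(a)} + \mu(a)(z_{\lambda}-1) + \tfrac{1}{2}\bigl(\sigma^{2}(a)-\mu(a)\bigr)(z_{\lambda}-1)^{2} + \lambda\!\int_{0}^{a} R\bigl(\bar{F}_{T}(a-s)(z_{\lambda}-1)\bigr)\,ds.
\]

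I would then expand $z_{\lambda}-1 = it/\sigma(a) - t^{2}/(2\sigma^{2}(a)) + O(\sigma^{-3})$. The imaginary first-order parts of the centering term and of the linear term cancel exactly; the surviving $-t^{2}\mu(a)/(2\sigma^{2}(a))$ from the linear term then combines with the leading $-\tfrac{1}{2}(\sigma^{2}(a)-\mu(a))t^{2}/\sigma^{2}(a)$ from the quadratic term to yield precisely $-t^{2}/2$. Here it is essential that $\mu(a)/\sigma^{2}(a)$ is constant in $\lambda$, so the $\mu(a)/\sigma^{2}(a)$ pieces cancel rather than diverge or vanish, and all discarded terms are $O(\lambda^{-1/2})$. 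The main technical step, which I expect to be the chief obstacle, is controlling the remainder integral uniformly: using $|R(\omega)|\le C|\omega|^{2+\epsilon}$ and $\bar{F}_{T}\le 1$ gives the bound $\lambda a C\,|z_{\lambda}-1|^{2+\epsilon} \le \lambda a C\,(|t|/\sigma(a))^{2+\epsilon} = O\!\bigl(\lambda^{\,1-(2+\epsilon)/2}\bigr) = O(\lambda^{-\epsilon/2}) \to 0$. The role of the hypothesis is exactly to furnish the margin $\epsilon>0$ that upgrades the naive $O(\lambda\cdot\lambda^{-1})=O(1)$ estimate into a vanishing one. Putting the pieces together, the logarithm of the characteristic function converges to $-t^{2}/2$ for every $t$, and L\'evy's continuity theorem gives $(M(a)-\mu(a))/\sigma(a) \stackrel{d}{\to} \mathsf{N}(0,1)$.
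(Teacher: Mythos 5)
Your proposal is correct and follows essentially the same route as the paper: convergence of characteristic functions via L\'evy's continuity theorem, the hypothesis on $G_N$ controlling the Taylor remainder of the generating function, second-order expansion of $e^{it/\sigma(a)}-1$ with exact cancellation of the $\mu(a)/\sigma^{2}(a)$ pieces to leave $-t^{2}/2$, and the $O(\lambda^{-\epsilon/2})$ bound on the remainder integral. The paper merely packages the exponential's Taylor remainders more formally via the functions $R_{n}(x)=e^{ix}-\sum_{k=0}^{n}(ix)^{k}/k!$ and the identity $R_{1}(x)^{2}+x^{2}=R_{2}(2x)-2R_{2}(x)$, which is equivalent to your expansion bookkeeping.
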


\begin{proof}
The characteristic function of $M(a)$ is $G_{M}(e^{i\omega};a)$. We aim to show that 
\begin{equation}
\lim_{\lambda \to \infty} e^{-i\tfrac{\omega \mu(a)}{\sigma(a)}} G_{M}(e^{i \tfrac{\omega}{\sigma(a)}};a) = \exp\left(- \tfrac{1}{2} \omega^{2}\right). \label{eq:ANlimit}   
\end{equation}
The result then follows by L\'{e}vy's convergence theorem. Define 
\[
R_{N}(\omega) =  G_{N}(1 + \omega) - \left(1 + G_{N}'(1) \omega + \frac{1}{2} G_{N}''(1) w^{2} \right).
\]
For non-negative integers $n$ and real $x$ define 
\[
R_{n}(x) = e^{ix}  - \sum_{k=0}^{n} \frac{(ix)^{k}}{k!}.
\]
Then $ |R_{0}(x)| \leq \min(2, |x|)$ and 
\begin{equation}
\left| R_{n}(x) \right| \leq \min \left(\frac{2|x|^{n}}{n!}, \frac{|x|^{n+1}}{(n+1)!} \right).    \label{eq:ANbound}
\end{equation}
\citep[pg 183]{Williams:91}. Note that
\begin{equation*}
e^{-i\tfrac{\omega \mu(a)}{\sigma(a)}} G_{M}(e^{i \tfrac{\omega}{\sigma(a)}};a)  = \exp \left(\lambda \int^a_0 \left[ G_{N}(1 + \bar{F}_{T}(a-s) (e^{i \tfrac{\omega}{\sigma(a)}}-1) ) - 1 \right] ds  - i \tfrac{\omega \mu(a)}{\sigma(a)} \right).   
\end{equation*}
From the expressions for $R_N $ and $\mu(a)$,
\begin{multline*}
     \lambda \int^a_0 \left[ G_{N}(1 + \bar{F}_{T}(a-s) (e^{i \tfrac{\omega}{\sigma(a)}}-1) ) - 1 \right] ds  - i \tfrac{\omega \mu(a)}{\sigma(a)} \\
      =  \lambda G_{N}'(1) \left(\int^a_0  \bar{F}_{T}(a-s) ds\right) R_{1} \left(\frac{\omega}{\sigma(a)} \right)    + \frac{\lambda}{2}  G_{N}''(1) \left(\int^a_0\bar{F}_{T}^2(a-s) ds\right)   R_{0} \left(\frac{\omega}{\sigma(a)} \right)^2  \\
       + \lambda \int_0^a R_{N}\left(\bar{F}_{T}(a-s) (e^{i\tfrac{\omega}{\sigma(a)}} -1 ) \right)  ds 
\end{multline*}
From the expression for $\sigma^{2}(a)$ and the fact that  
\begin{align*}
   R_{1}(x)^2  + x^{2} = R_2(2x) - 2R_2(x),
\end{align*}
we obtain
\begin{multline*}
    \lambda \int^a_0 \left[ G_{N}(1 + \bar{F}_{T}(a-s) (e^{i \tfrac{\omega}{\sigma(a)}}-1) ) - 1 \right] ds  - i \tfrac{\omega \mu(a)}{\sigma(a)} \\
    = -\frac{\omega^{2}}{2} +
      \lambda G_{N}'(1) \left(\int^a_0  \bar{F}_{T}(a-s) ds\right) R_{2} \left(\frac{\omega}{\sigma(a)} \right)   \\
       + \frac{\lambda}{2}  G_{N}''(1) \left(\int^a_0\bar{F}_{T}^2(a-s) ds\right)   \left( R_{2} \left(\frac{2\omega}{\sigma(a)} \right) - 2 R_{2} \left(\frac{\omega}{\sigma(a)} \right) \right) \\
       + \lambda \int_0^a R_{N}\left(\bar{F}_{T}(a-s) (e^{i\tfrac{\omega}{\sigma(a)}} -1 ) \right)  ds  
\end{multline*}
Using the bound (\ref{eq:ANbound}) and the fact that $\sigma^{2}(a) \propto \lambda $, we see
\[
\lim_{\lambda\to\infty} \lambda G_{N}'(1) \left(\int^a_0  \bar{F}_{T}(a-s) ds\right) R_{2} \left(\frac{\omega}{\sigma(a)} \right) = 0 
\]
and
\[
\lim_{\lambda\to\infty} \frac{\lambda}{2}  G_{N}''(1) \left(\int^a_0\bar{F}_{T}^2(a-s) ds\right)   \left( R_{2} \left(\frac{2\omega}{\sigma(a)} \right) - 2 R_{2} \left(\frac{\omega}{\sigma(a)} \right) \right) = 0.
\]
Finally, using $ \left| R_{N}(\omega) \right| \leq C|\omega|^{2+\epsilon}$ together with the bound (\ref{eq:ANbound}) and the fact that $\sigma^{2}(a) \propto \lambda $, we see
\[
\lim_{\lambda\to\infty} \lambda \int_0^a R_{N}\left(\bar{F}_{T}(a-s) (e^{i\tfrac{\omega}{\sigma(a)}} -1 ) \right)  ds  = 0.
\]
Hence, the limit (\ref{eq:ANlimit}) holds.
\end{proof}

Figure \ref{Fig:Normal} compares the probability mass function from the Tallis-Leyton model with the probability density function of the approximating normal distribution. The host was aged 3 with $ N \sim \mathsf{NB}(1,1)$ and $T \sim \mathsf{Exp}(1)$. When $\lambda = 8$, the probability mass function still shows some right skewness. The normal approximation in this instance places a non-negligible probability on values less than zero. When $\lambda = 128$, the probability mass function is very close to symmetric and the normal distribution provides a good approximation. Figure \ref{Fig:CV} shows the Hoover and Gini indices together with the approximations based on asymptotic normality (\ref{eq:Hoover_approx}) and (\ref{eq:Gini_approx}). In this instance the approximations appear reasonably accurate even for $\lambda $ as small as 2, where the normal approximation fails.

\begin{figure}[h] 
    \centering
    \includegraphics{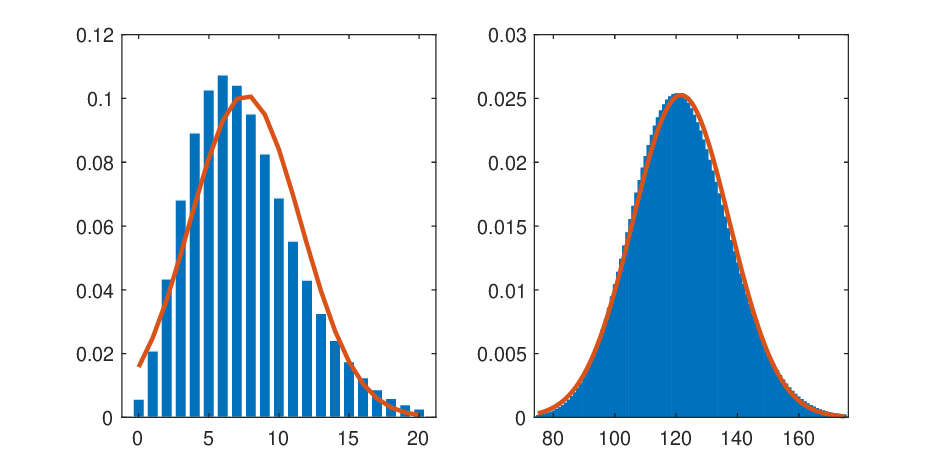}
    \caption{Probability mass function (blue bars) and approximating normal probability density function (red line) for a host aged 3 in the Tallis-Leyton model with $ N \sim \mathsf{NB}(1,1)$, $ T \sim \mathsf{Exp}(1)$, and $\lambda = 8$ (left) and $\lambda = 128$ (right).}
    \label{Fig:Normal}
\end{figure}

\begin{figure}[h] 
    \centering
    \includegraphics{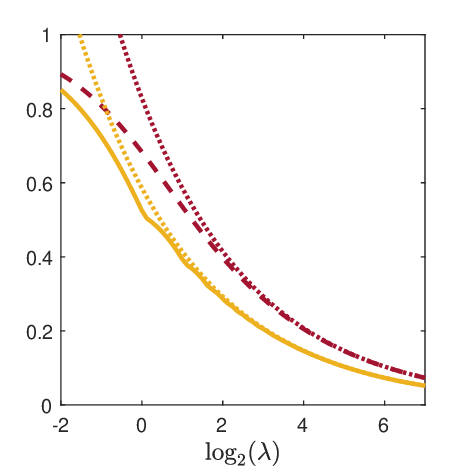}
    \caption{Hoover index (yellow line) and Gini index (purple dashed line) together with the asymptotic normal approximations (dotted lines) for a host aged 3 in the Tallis-Leyton model with $ N \sim \mathsf{NB}(1,1)$ and $ T \sim \mathsf{Exp}(1)$.}
    \label{Fig:CV}
\end{figure}

\section{Discussion}

In this paper, we examined the aggregation of a host's parasite burden in the Tallis-Leyton model, interpreting aggregation in terms of the Lorenz order. Our analysis showed that increasing the host's age or the rate of infective contacts decreases aggregation. Similarly, increasing the aggregation of the clumps of parasites that enter the host as infective contacts also increases the aggregation of the host's parasite burden. On the other hand, less variability in the parasite's lifetime distribution, as defined in the convex order, results in greater aggregation of the host's parasite burden.

Unfortunately, the population dynamics of parasites are often more complicated than what is represented in the Tallis-Leyton model. Some parasites need multiple hosts to complete its life cycle. Once a parasite finds a host it may be subject to intraspecific and interspecific competition for resources. Furthermore, parasites often interact with the host either by stimulating an immune response from the host or by increasing the host's mortality rate. 

\citet{Isham:95} proposed a simple stochastic model that incorporate parasite induced host mortality. In Isham's model, the host acquires parasites following the same dynamics as the Tallis-Leyton model and parasite lifetimes are assumed exponentially distributed. The important difference in Isham's model is that each parasite present in the host increases the host's death rate by a fixed amount $\alpha$. A complete analysis of Isham's model in terms of the Lorenz order is beyond the scope of this paper. In a special case, however, we can see that parasite induced host mortality increases aggregation of the parasite distribution, as interpreted in the Lorenz order. When the number of parasites that enter the host at an infective contact follows a geometric distribution, an explicit expression for the limiting distribution is possible. Specifically, if $ N \sim \mathsf{NB}(m,1)$, then 
\[
    M(\infty) \sim \mathsf{NB}\left(\frac{\lambda m}{ \mathbb{E} T + \alpha + \alpha m} , \frac{\lambda}{ \mathbb{E} T + \alpha + \alpha m}\right).
\]
As the negative binomial distribution is decreasing in Lorenz order in both mean and $k$, it follows that indices respecting the Lorenz order are increasing in the parasite induced host mortality rate. In contrast, the variance-to-mean ratio is $1  + m $ so it is not affected by the parasite induced mortality.

A complete examination Isham's model in terms of the Lorenz order may prove challenging. Even computing the Gini and Hoover indices may present difficulties since they require absolute moments, which are often not easily evaluated. In that case, the coefficient of variation may prove useful since it respects the Lorenz order, is easily evaluated, and can be used to approximate the Gini and Hoover indices when the distribution is approximately normal. 

\bigskip 

\noindent {\bf Data availability:} Data sharing not applicable to this article as no datasets were generated or analysed during the current study.

%
%
%
%

\bibliographystyle{plainnat}
\bibliography{parasites}

\end{document}